\newtheorem{rem}[theorem]{Remark}
\providecommand{\openbox}{\leavevmode
  \hbox to.77778em{%
  \hfil\vrule
  \vbox to.675em{\hrule width.6em\vfil\hrule}%
  \vrule\hfil}}
\DeclareRobustCommand{\qed}{%
  \ifmmode
    \eqno \def\@badmath{$$}
    \let\eqno\relax \let\leqno\relax \let\veqno\relax
    \hbox{\openbox}%
  \else
    \leavevmode\unskip\penalty9999 \hbox{}\nobreak\hfill
    \quad\hbox{\openbox}%
  \fi
}
\renewcommand{\geq}{\geqslant}
\renewcommand{\leq}{\leqslant}
\newcommand{\pow}{\mathscr{P}}
\newcommand{\sT}{\mid}
\newcommand{\N}{{\nats_A}}
\newcommand{\R}{{\mathbb{R}_A}}
\newcommand{\Rs}{{\mathbb{R}^\mu _A}}
\newcommand{\rk}{{\mathsf{rk}}}
\newcommand{\TrCl}[1]{{\mathsf{trCl}\left(#1\right)}}
\newcommand{\absb}[1]{\big|#1\big|}
\newcommand{\abs}[1]{\left|#1\right|}
\newcommand{\defAs}{\coloneqq}
\newcommand{\nats}{\mathbb{N}}
\newcommand{\HF}{\mbox{$\mathsf{HF}$}}
\newcommand{\HHF}{\mbox{$\HF^{1/2}$}}
\newcommand{\reals}{\mathbb{R}}
\newcommand{\om}{\mathbf{\Omega}}
\newcommand{\csy}{\mathscr{C}}
\newcommand{\ses}{\mathscr{S}}
\newcommand{\Rel}{\mathrel{\mathsf{R}}}
\newcommand{\x}{\varsigma}
\newcommand{\xsi}{x}
\title{Encoding Sets as Real Numbers\\[.1cm] \large{(Extended version)}}
\author{Domenico Cantone\inst{1} \and Alberto Policriti\inst{2}}
\institute{Dept.\ of Mathematics and Computer Science, University of Catania, Italy \and Dept.\ of Mathematics, Computer Science, and Physics, University of Udine, Italy}
\date{\today}
\begin{document}

	    
 \maketitle
 
\begin{abstract}
We study a variant of the Ackermann encoding $\N(x)  \defAs  \sum_{y\in x}2^{\N(y)}$ of the hereditarily finite sets by the natural numbers, applicable to the larger collection $\HHF$ of the hereditarily finite hypersets. The proposed variation is obtained by simply placing a `minus' sign before each exponent in the definition of $\N$, resulting in the expression $\R(x) \defAs  \sum_{y\in x}2^{-\R(y)}$. By a careful analysis, we prove that the encoding $\R$ is well-defined over the whole collection $\HHF$, as it allows one to univocally assign a real-valued code to each hereditarily finite hyperset. 
We also address some preliminary cases of the injectivity problem for $\R$.
\end{abstract}

 \section{Introduction}
 In 1937, W.\ Ackermann proposed the following recursive encoding of hereditarily finite sets by natural numbers:
  \begin{equation}
    \N(x)  \defAs  \sum_{y\in x}2^{\N(y)}.\label{NA}
  \end{equation}
 The encoding $\N$ is simple, elegant, and highly expressive for a number of reasons. On the one hand, it builds a strong bridge between two foundational mathematical structures: (hereditarily finite) sets and 
(natural) numbers. 
On the other hand, it enables the representation of the \emph{characteristic function} of hereditarily finite sets in terms of the usual notation for natural numbers as sequences of binary digits. That is: $y$ belongs to $x$ if and only if the $\N(y)$-th digit in the binary expansion of $\N(x)$ is equal to 1. As one would expect, the string of 0's and 1's representing $\N(x)$ is nothing but (a representation of) the characteristic function of $x$.
 
In this paper we study a very simple variation of the encoding $\N$, originally proposed in \cite{DBLP:conf/cilc/Policriti13a} and discussed in \cite{DBLP:journals/fuin/DAgostinoOPT15} and \cite{OmodeoPT2017},  applicable to a larger collection of sets. The proposed variation is obtained by simply placing a minus sign before each exponent in \eqref{NA}, resulting in the expression:
\[ 
    \R(x) \defAs  \sum_{y\in x}2^{-\R(y)}.
\]
As opposed to the encoding $\N$, the range of $\R$ is not contained in the set $\nats$ of the natural numbers, but it extends to the set $\reals$ of real numbers. In addition, the domain of $\R$ can be expanded so as to include also the \emph{non-well-founded} hereditarily finite sets, namely, the sets defined by (finite) systems of equations of the following form  
\begin{equation}\label{system}
\left\{
	\begin{aligned}
		\x_1 & =   \{\x_{1,1}, \ldots , \x_{1,m_1}\} \\
		 &   \hspace{.2cm}\vdots \\
		\x_n & =   \{\x_{n,1}, \ldots , \x_{n,m_n}\}, \\
	\end{aligned}
\right.
\end{equation}
with \emph{bisimilarity} as equality criterion (see \cite{Acz} and \cite{BM}, where the term \emph{hyperset} is also used). For instance, the special case of the single set equation $\x=\{\x\}$, resulting into the equation (in real numbers) $\xsi=2^{-\xsi}$, is illustrated in Section~\ref{new} and
provides the code of the unique (under bisimilarity) hyperset $\Omega=\{\Omega\}$.\footnote{Notice that the solution to the equation $\xsi=e^{-\xsi}$ is the so-called \emph{omega} constant, introduced by Lambert in \cite{Lambert1758} and studied also by Euler in \cite{EulerOmnia}.}

While the encoding $\N$ is defined inductively (and this is perfectly in line with our intuition of the very basic properties of the collections of natural numbers $\nats$ and of hereditarily finite sets $\HF$---called $\HF^0$ in \cite{BM}), the definition of $\R$, instead, is not inductive when extended to non-well-founded sets, and thus it requires a more careful analysis, as it must be \emph{proved} that it univocally (and possibly injectively) associates (real) numbers to sets. 
 
The injectivity of $\R$ on the collection of well-founded and non-well-founded hereditarily finite sets---henceforth, to be referred to as $\HHF$, see  \cite{BM}---was conjectured in \cite{DBLP:conf/cilc/Policriti13a} and is still an open problem. Here we prove that, given any finite collection $\hbar_1, \ldots, \hbar_n$ of pairwise distinct sets in $\HHF$ satisfying a system of set-theoretic equations of the form \eqref{system} in $n$ unknowns, one can univocally determine real numbers $\R(\hbar_1) , \ldots , \R(\hbar_n)$ satisfying the following system of equations:
\[\left\{
\begin{aligned}
 \R(\hbar_1) & =  \textstyle\sum_{k = 1}^{m_1}2^{-\R(\hbar_{1,k})}\\
   &\hspace{.2cm}\vdots\\
 \R(\hbar_n) & =  \textstyle\sum_{k = 1}^{m_n}2^{-\R(\hbar_{n,k})}.
\end{aligned}
\right. \]

This preliminary result shows that the definition of $\R$ is well-given, as it associates a unique (real) number to each hereditarily finite hyperset in $\HHF$. This extends to $\HHF$ the first of the properties that the encoding $\N$ enjoys with respect to $\HF$. Should $\R$ also enjoy the injectivity property, the proposed adaptation of $\N$ would be completely satisfiying, and $\R$ could be coherently dubbed an \emph{encoding} for $\HHF$. 
 
In the course of our proof, we shall also present a procedure that drives us to the real numbers $\R(\hbar_1) , \ldots , \R(\hbar_n)$ mentioned above by way of successive approximations. In the well-founded case, our procedure will converge in a finite number of steps, whereas infinitely many steps will be required for convergence in the non-well-founded case. In the last section of the paper, we shall also briefly hint at the injectivity problem for $\R$.
 
 \section{Basics}\label{basics}
Let $\nats$ be the set of natural numbers and let $\pow(\cdot)$ denote the \emph{powerset} operator.

\begin{definition}[Hereditarily finite sets] $\HF \defAs \bigcup_{n \in \nats} \HF_n$ is the collection of all \emph{hereditarily finite} sets, where
\[
\left\{
\begin{aligned}
\HF_0 &\defAs \emptyset\/,\\
\HF_{n+1} &\defAs \pow(\HF_n)\/, \quad \text{for } n \in \nats.
\end{aligned}
\right.
\]
\end{definition}

In this paper we shall introduce and study a variation of  the following map introduced by Ackermann in 1937 (see \cite{Ackermann-37}): 
\begin{definition}[Ackermann encoding]\label{Ack}
  \begin{align*} 
    \N(h) & \defAs  \sum_{h'\in h}2^{\N(h')}\/, \quad \text{for } h \in \HF.
  \end{align*}
\end{definition}
It is easy to see that the  map $\N$ is a bijection between $\HF$.

From now on, we denote by $h_i$ the element of $\HF$ whose $\N$-code is $i$, so that $\N(h_{i}) = i$, for $i \in \nats$.
Using the iterated-singleton notation 
\[
\begin{aligned}
\{x\}^{0} &\defAs x\\
\{x\}^{n+1} &\defAs \big\{\{x\}^{n}\big\}, && \text{for $n \in \nats$,}
\end{aligned}
\]
we plainly have:
\[
h_{0} = \emptyset,\quad h_{1} = \{\emptyset\},\quad h_{2} = \{\emptyset\}^{2},\quad h_{3} = \{\{\emptyset\},\emptyset\},\quad h_{4} = \{\emptyset\}^{3},\quad \text{etc.}
\]
In addition, for $j \in \nats$ we have:
\begin{gather}
h_{0} \in h_{j}\quad \text{iff} \quad j \text{ is odd,}\label{jOdd}\\
h_{2^{j}} = \{h_{j}\} \quad \text{and} \quad 
h_{2^{j}-1} = \{h_{0},h_{1},\ldots,h_{j-1}\}\,.\label{2toj}
\end{gather}
The  map $\N$ induces a total ordering $\prec$ among the elements of $\HF$ (that we shall call \emph{Ackermann ordering}) such that, for $h,h' \in \HF$:
\[
h \prec h' \qquad \text{iff} \qquad \N(h) < \N(h')\,.
\]
Thus, $h_{i}$ is the $i$-th element of $\HF$ in the Ackermann ordering and, for $i,j \in \nats$, we plainly have:
\[
h_i \prec h_j \qquad \text{iff} \qquad i < j.
\]

The following proposition can be read as a restating of the bitwise comparison between natural numbers in set-theoretic terms.
\begin{proposition}\label{antilex}
For $h_i, h_j \in \HF$, the following equivalence holds: 
\[
h_i \prec h_j \qquad \text{iff} \qquad h_i \neq h_j \text{~and~} \max_\prec(h_i \mathrel{\Delta} h_j) \in h_{j}\,,
\]
where $\Delta$ is the symmetric difference operator $A \mathrel{\Delta} B \defAs (A \setminus B) \cup (B \setminus A)$.
\end{proposition}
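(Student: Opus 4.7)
The plan is to reduce the statement to the standard bitwise comparison of natural numbers, which is the content that Proposition~\ref{antilex} is meant to dress up in set-theoretic language.

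First I would make precise the observation already hinted at in the introduction: by the definition $\N(h)=\sum_{h'\in h}2^{\N(h')}$, the binary expansion of $\N(h)$ has a $1$ in position $k$ if and only if $h_k\in h$. Because the exponents $\N(h')$ for $h'\in h$ are pairwise distinct natural numbers, no carries occur in the sum, so this is genuinely the binary representation. In particular, for $h_i,h_j\in\HF$,
\[
\{k\in\nats\mid \text{the }k\text{-th binary digits of }i\text{ and }j\text{ differ}\}=\{\N(h_k)\mid h_k\in h_i\mathrel{\Delta}h_j\}.
\]

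Next I would invoke (or quickly re-prove) the elementary bitwise comparison lemma: for distinct $i,j\in\nats$, letting $K$ be the largest index where the binary digits of $i$ and $j$ differ, one has $i<j$ iff the $K$-th digit of $j$ is $1$. The easy direction is clear; the converse uses the bound $2^K>\sum_{\ell<K}2^\ell$, which guarantees that whatever happens in the lower bits cannot reverse the inequality forced at position $K$.

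Combining these two facts gives the proposition. By the correspondence in the first step, the largest index $K$ where the binary digits of $i=\N(h_i)$ and $j=\N(h_j)$ differ equals $\N(h_k)$ for $h_k=\max_\prec(h_i\mathrel{\Delta}h_j)$ (which is well-defined since $h_i\mathrel{\Delta}h_j$ is finite and nonempty when $h_i\neq h_j$). The $K$-th bit of $\N(h_j)$ is $1$ precisely when $h_k\in h_j$. Hence $h_i\prec h_j$, i.e.\ $\N(h_i)<\N(h_j)$, iff $h_i\neq h_j$ and $\max_\prec(h_i\mathrel{\Delta}h_j)\in h_j$.

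The only delicate point is really the bitwise comparison lemma, and even that is routine; the main care goes into cleanly translating between the set-theoretic and numeric descriptions of the symmetric difference, so that $\max_\prec$ on the set side matches ``most significant differing bit'' on the numeric side.
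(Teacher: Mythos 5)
Your proposal is correct and follows exactly the route the paper takes: the paper's proof simply says that since $h_i \prec h_j$ is equivalent to $i < j$, it suffices to compare the base-two expansions of $i$ and $j$ and apply the definition of $\N$. You have merely spelled out the details (the digit/membership correspondence and the most-significant-differing-bit criterion) that the paper leaves to the reader.
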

\begin{proof}
Since $h_i \prec h_j$ is equivalent to $i < j$, it is sufficient to
 compare the base two expansions of $i$ and $j$ and apply Definition \ref{Ack}. \qed
\end{proof}

\newcommand{\low}{\mathsf{low}}
It is also useful to define the following map $\low \colon \nats \rightarrow \nats$
which, for $i \in \nats$, computes the smallest code $j$ of a set $h_{j}$ not present in $h_{i}$ (or, equivalently, the position---starting from $0$---of the lowest bit set to $0$ in the binary expansion of $i$): 
\[
\low(i) \defAs \min \{j \sT h_{j} \notin h_{i}\}\,.
\]

The following elementary properties, whose proof is left to the reader, will be used in the last section of the paper.
\begin{lemma}\label{lowLemma}
For $i \in \nats$, we have:
\begin{enumerate}[label=(\roman*),leftmargin=.8cm]
\item\label{lowLemmai} $\low(i) = 0$ \quad iff \quad  $i$ is even,

\item\label{lowLemmaii} $h_{\low(i)} \notin h_{i}$,

\item\label{lowLemmaiii} $\{h_{0},\ldots,h_{\low(i)-1}\} \subseteq h_{i}$,

\item\label{lowLemmaiv} $h_{i+1} = \big( h_{i} \:\setminus\: \{h_{0},\ldots,h_{\low(i)-1}\} \big) \cup \{h_{\low(i)}\}$.
\end{enumerate}
\end{lemma}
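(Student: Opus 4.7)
The plan is to read each clause through the bit-level reformulation granted by Definition~\ref{Ack}: $h_j \in h_i$ if and only if the $j$-th digit in the binary expansion of $i$ equals $1$. Under this correspondence $\low(i)$ is precisely the position of the lowest $0$-bit of $i$, and each of the four statements becomes a transparent fact about binary numerals. I would open the proof with this observation, together with the remark that $\low$ is well-defined: since $h_i$ is finite, the set $\{j : h_j \notin h_i\}$ is non-empty (indeed co-finite in $\nats$), so it admits a minimum.

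For (i) I would simply note that $\low(i)=0$ amounts to $h_0 \notin h_i$ and invoke the already established \eqref{jOdd}, which states $h_0 \in h_j$ iff $j$ is odd. Item (ii) is then a direct restatement of the fact that the minimum of $\{j : h_j \notin h_i\}$ lies in that very set, and (iii) follows from the strict minimality of $\low(i)$: for every $j<\low(i)$ the set $h_j$ must belong to $h_i$.

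The real substance is concentrated in (iv), and this is the only step that calls for a short argument. Writing $\ell\defAs\low(i)$, items (ii) and (iii) tell us that, in binary, $i$ has the digits $1,\ldots,1$ at positions $0,\ldots,\ell-1$ and the digit $0$ at position $\ell$, with higher-order digits unconstrained. I would then invoke the ripple-carry description of the successor operation, which can be verified at once by the elementary identity $2^0+2^1+\cdots+2^{\ell-1}+1=2^\ell$: adding $1$ to $i$ flips exactly these $\ell+1$ lowest digits, so that in $i+1$ the positions $0,\ldots,\ell-1$ carry the digit $0$, position $\ell$ carries the digit $1$, and every position above $\ell$ agrees with the corresponding position in $i$. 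Translating back via Definition~\ref{Ack}, this yields $h_{i+1}=(h_i\setminus\{h_0,\ldots,h_{\ell-1}\})\cup\{h_\ell\}$, as desired. The only mildly delicate point is formalising the ripple-carry step, but it is a routine base-$2$ computation that requires no induction beyond the identity just displayed.
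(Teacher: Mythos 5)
Your proof is correct, and it follows exactly the route the paper has in mind: the paper omits the proof ("left to the reader") but explicitly describes $\low(i)$ as the position of the lowest bit set to $0$ in the binary expansion of $i$, which is precisely the bit-level reading you adopt, and your ripple-carry argument for (iv) via $2^{0}+\cdots+2^{\ell-1}+1=2^{\ell}$ is the intended routine computation. Nothing is missing.
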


Finally, we briefly recall that hypersets satisfy all axioms of ZFC, but the axiom of regularity, which forbids both membership cycles and infinite descending chains of memberships. In the hypersets realm of our interest, based on the Forti-Honsell axiomatization \cite{FH83} as popularized by P.~Aczel \cite{Acz}, the axiom of regularity is replaced by the anti-foundation axiom (AFA). 
Roughly speaking, AFA states that every conceivable hyperset, described in terms of a graph specification modelling  its internal membership structure, actually exists and is univocally determined, regardless of the presence of cycles or infinite descending paths in its graph specification. 
To be slightly more precise, a graph specification (or membership graph) is a directed graph with a distinguished node (\emph{point}), where nodes are intended to represent hypersets, edges model membership relationships among the node/hypersets, and the distinguished node denotes the hyperset of interest among all the hypersets represented by the nodes of the graph.
However, extensionality needs to be strengthened so as structurally indistinguishable pointed graphs are always realized by the same hyperset.
More specifically, we say that two pointed graphs $G = (V,E,p)$ and $G' = (V',E',p')$, where $p \in V$ and $p' \in V'$ are the `points' of $G$ and $G'$, respectively, are \emph{structurally indistinguishable} if the points $p$ and $p'$ are \emph{bisimilar}, namely, if there exists a relation $\Rel$ over $V \times V'$ such that the following three properties hold, for all $v\in V$ and $v'\in V'$:
\begin{enumerate}[label=(B\arabic*),leftmargin=.9cm]
\item\label{B1} $(\forall w \in V)(\exists w' \in V') \big( (v \Rel v' \wedge (v,w) \in E) \rightarrow ((v',w') \in E' \wedge w \Rel w')\big)$;

\item\label{B2} $(\forall w' \in V')(\exists w \in V) \big( (v \Rel v' \wedge (v',w') \in E') \rightarrow ((v,w) \in E \wedge w \Rel w')\big)$;

\item $p \Rel p'$.
\end{enumerate}
Any relation $\Rel' \subseteq V \times V'$ satisfying properties \ref{B1} and \ref{B2} above is called a \emph{bisimulation} over $V \times V'$.

\smallskip

In this paper, we are interested in those hypersets that admit a representation as a \emph{finite} pointed graph: these are the \emph{hereditarily finite hypersets} in $\HHF$.

Given a (finite) pointed graph $G=(V,E,p)$ whose nodes are all reachable from its point $p$, the collection of hypersets represented by all its nodes form the \emph{transitive closure} of $\{\hbar\}$,  denoted $ \TrCl{\{\hbar\}} $, where $\hbar$ is the hyperset corresponding to the point $p$.





%

\section{The real-valued map $\R$}\label{new}
Consider the following map $\R$ obtained from  $\N$ by simply placing a minus sign before each exponent in (\ref{NA}):
  \begin{align}
    \R(x)  &\defAs  \sum_{y\in x} 2^{-\R(y)}.\label{RA}
  \end{align}

From \eqref{RA} it follows immediately that all (valid) $\R$-codes are nonnegative. For instance, we have:
\[
\begin{aligned}
\R(\emptyset) &= 0, & \R(\{\emptyset\}) &= 1, & \R(\{\emptyset\}^{2}) &= \frac{1}{2}, \\
\R(\{\emptyset\}^{3}) &= \frac{1}{\sqrt{2}},~~~ & \R(\{\emptyset\}^{4}) &= 2^{-\frac{1}{\sqrt{2}}}, ~~~ & \R(\{\emptyset\}^{5}) &= 2^{-2^{-\frac{1}{\sqrt{2}}}}, \quad \text{etc.} 
\end{aligned}
\]

The definition of $\R$ bears a strong formal similarity with $\N$, but calls into play real numbers. As a further example, from the definition of $\R$, it follows that the hyperset defined by the set equation $\x = \{\x\}$ yields the equation in $\reals$
\begin{align}\label{omega}
\xsi & =  2^{-\xsi}\,.
\end{align}
It is easy to see that the equation \eqref{omega} has a unique solution in $\reals$, since the functions $\xsi$ and $2^{-\xsi}$ are, respectively, strictly increasing and strictly decreasing, so that the function $\xsi-2^{-\xsi}$ is strictly increasing. In addition, we have:
\[
\xsi-2^{-\xsi}|_{\xsi = \frac{1}{2}} = \frac{1}{2} - \frac{1}{\sqrt{2}} < 0 < 1 - \frac{1}{2} = \xsi-2^{-\xsi}|_{\xsi = 1}.
\]
Thus, the solution $\om$ of (\ref{omega}) over $\reals$, namely, the code of the hyperset defined by the set equation $x = \{x\}$, satisfies $\frac{1}{2} < \om < 1$. Furthermore, much by the same argument used by the Pythagoreans to prove the irrationality of $\sqrt{2}$, it can easily be shown that $\om$ is irrational. In fact, $\om$ is transcendental. This follows from the Gelfond-Schneider theorem (see \cite{G34}), which states that every real number of the form $a^{b}$ is transcendental, provided that $a$ and $b$ are algebraic numbers such that $0 \neq a \neq 1$, and $b$ is irrational.\footnote{We recall that the Gelfond-Schneider theorem, obtained independently in 1934 by A.~O.\ Gelfond and Th.\ Schneider, solves completely the seventh in a well-celebrated list of twenty-three problems posed by David Hilbert at the \emph{International Congress of Mathematicians} held in Paris, 1900 (see~\cite{Hilbert-1902}).} Indeed, if $\om$ were algebraic, so would be $-\om$ and therefore, by the Gelfond-Schneider theorem, $2^{-\om} = \om$ would be transcendental, contradicting the assumed algebraicity of $\om$. Thus, $\om$ must be transcendental after all. As is easy to check, the $\R$-code $2^{-1/\sqrt{2}}$ of $\{\emptyset\}^4$ is transcendental as well.

Much as for $\N$, the encoding $\R$ is easily seen to be well-defined over $\HF$. As a consequence of the results to be proved in Section~\ref{hyper}, we shall see that \eqref{RA} allows one to associate univocally a code to each non-well-founded hereditarily finite set as well, thus showing that $\R$ is also well-defined over the whole collection $\HHF$ of hereditarily finite hypersets.

\begin{rem}
For every singleton $\{h'\} \in \HF$, definition \eqref{RA} gives $\R(\{h'\})  =  2^{-\R(h')}$. Thus, for every $h \in \HF$, we have 
\begin{equation}
\label{repeatedly}
\R(h) = \sum_{h' \in h}^n 2^{-\R(h')} = \sum_{h' \in h}^n \R\left(\{h'\}\right).
\end{equation}

Once we prove that $\R$ is well-defined over $\HHF$,  equation \eqref{repeatedly} can be immediately generalized to $\HHF$ as well.\qed
\end{rem}

As the following proposition shows, the codes of hereditarily finite sets can grow arbitrarily large.

\begin{proposition}
For any $n\in \nats$, there exists an $i \in \nats$ such that $\R(h_i)>n$.
\end{proposition}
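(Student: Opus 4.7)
My plan is to argue by contradiction and exploit the fact, recorded in \eqref{2toj}, that $h_{2^{j}-1}=\{h_{0},h_{1},\ldots,h_{j-1}\}$ for every $j \in \nats$. Suppose, contrary to the claim, that there exists some $C \in \nats$ such that $\R(h) \leq C$ for all $h \in \HF$. Then in particular $2^{-\R(h_{k})} \geq 2^{-C}$ for every $k \in \nats$.

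Applying the defining equation \eqref{RA} to $h_{2^{j}-1}$, I obtain
\[
\R(h_{2^{j}-1}) \;=\; \sum_{k=0}^{j-1} 2^{-\R(h_{k})} \;\geq\; j \cdot 2^{-C}.
\]
Choosing any integer $j > C \cdot 2^{C}$ then yields $\R(h_{2^{j}-1}) > C$, contradicting the assumed bound. Hence no such uniform bound $C$ exists, and for every $n \in \nats$ there must exist some $i \in \nats$ with $\R(h_{i}) > n$; concretely, it suffices to take $i = 2^{j}-1$ for $j$ large enough so that $j \cdot 2^{-\,\max_{k<2^j}\R(h_k)} > n$, a requirement that will be met once $j$ is chosen past the threshold forced by the above contradiction.

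There is essentially no obstacle here: the only quantitative ingredient is the trivial lower bound $2^{-\R(h_{k})} \geq 2^{-C}$ under the boundedness hypothesis, after which the linear growth of the cardinality of $h_{2^{j}-1}$ in $j$ does all the work. One could alternatively phrase the argument directly (without contradiction), by showing inductively that the sequence $a_{j} \defAs \R(h_{2^{j}-1})$ cannot remain bounded, but the contrapositive form above is shorter and more transparent.
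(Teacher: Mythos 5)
Your contradiction argument is correct and complete: assuming $\R(h)\leq C$ for all $h\in\HF$ gives $2^{-\R(h_{k})}\geq 2^{-C}$ for every $k$, whence by \eqref{2toj} and \eqref{RA} one gets $\R(h_{2^{j}-1})=\sum_{k=0}^{j-1}2^{-\R(h_{k})}\geq j\cdot 2^{-C}$, which exceeds $C$ for $j>C\cdot 2^{C}$; unboundedness of $\R$ on $\HF$ is exactly the claim. This is a genuinely different route from the paper's, which is direct and constructive: there one fixes $n$, notes that $\emptyset\in h_{k'}$ for odd $k'$ forces $\R(h_{k'})\geq 1$, hence $\R(\{h_{k'}\})\leq\tfrac{1}{2}$ and $\R(\{\{h_{k'}\}\})>\tfrac{1}{2}$, and exhibits the explicit witness $h=\big\{\{h_{k'}\}\ :\ k'\leq 4n\big\}$, whose $2n$ odd-indexed members already contribute more than $n$. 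The paper's approach buys an effective witness for each $n$ and some quantitative control on which sets have large codes; yours is shorter, uses only \eqref{2toj}, and avoids the double-singleton computation, but is non-constructive. One caveat: your closing ``concretely, take $i=2^{j}-1$ with $j\cdot 2^{-\max_{k<2^{j}}\R(h_{k})}>n$'' is not justified by the argument you gave --- the contradiction establishes that $\R$ is unbounded, not that $\max_{k<2^{j}}\R(h_{k})$ grows slowly enough relative to $\log_{2}j$ for such a $j$ to exist (it does, but that needs a separate argument, e.g.\ counting the singletons $h_{2^{m}}=\{h_{m}\}$ with $2^{m}<2^{j}$, each contributing at least $\tfrac{1}{2}$). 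Since that sentence is an optional aside, the proof stands without it.
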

\begin{proof}
Notice that for any odd natural number $j$, we have $\emptyset \in h_{j}$. Thus, by \eqref{repeatedly}, we have $\R(h_{j}) \geq R(\{\emptyset\}) = 1$, $\R(\{h_{j}\}) =2^{-\R(h_{j})} \leq 2^{-1} = \frac{1}{2}$, and $\R(\{\{h_{j}\}\}) =2^{-\R(\{h_{j}\})} \geq 2^{-1/2} > \frac{1}{2}$.

Let $k=4n$ and consider the hereditarily finite set $h \defAs \big\{\{h_{k'}\} \ : \  k'\leq k\big\}$. Then, we have:
\[
\R(h) =\sum_{k'=0}^{k} \R\left(\left\{\{h_{k'}\}\right\}\right) \geq \sum_{\substack{k'=0 \\ k' \text{ is odd}}}^{k} \R\left(\left\{\{h_{k'}\}\right\}\right) > \frac{1}{2} \cdot \frac{k}{2} = n\,. \qed
\]
\end{proof}

\section{$\R$ on Hereditarily Finite Hypersets}\label{hyper}

The fully general case  corresponds to considering systems 
of set-theoretic equations such as the ones introduced by the following definition (see also \cite{Acz}).
\newcommand{\idxmp}{I_{\!\ses}}
\begin{definition}[\textbf{Set systems}] 
	A \emph{set system $\ses(\x_1, \ldots, \x_n)$} in the set unknowns $\x_1, \ldots , \x_n$ is a collection of set-theoretic equations of the form:
\begin{equation}\label{setSystem}
\left\{
	\begin{aligned}
		\x_1 & =   \{\x_{1,1}, \ldots , \x_{1,m_1}\} \\
		 &   \hspace{.2cm}\vdots \\
		\x_n & =   \{\x_{n,1}, \ldots , \x_{n,m_n}\}, \\
	\end{aligned}
\right.
\end{equation}
with $m_i \geq 0$ for $i\in \{1, \ldots, n\}$, and where each unknown $\x_{i,u}$, for $i\in \{1, \ldots, n\}$ and $u \in \{ 1, \ldots, m_i\}$, occurs among the unknowns $\x_1, \ldots , \x_n$.\footnote{When $m_i = 0$, the expression $\{\x_{i,1}, \ldots , \x_{i,m_i}\}$ reduces to the empty set  expression $\{\}$.}
	
The \emph{index map} $\idxmp$ of $\ses(\x_1, \ldots, \x_n)$ is the map 
\[
\idxmp \colon \bigcup_{i=1}^{n} \{\langle i,v\rangle \sT 1 \leq v \leq m_{i}\} \rightarrow \{1,\ldots,n\}
\]
such that $\idxmp(i,v)$ is the index of the unknown $\x_{i,v}$ in the list $\x_1, \ldots, \x_n$, for $i \in \{1,\ldots,n\}$ and $v \in \{1,\ldots,m_{i}\}$, namely, $\x_{\idxmp(i,v)} \equiv \x_{i,v}$.

A set system $\ses(\x_1, \ldots, \x_n)$ is \emph{normal} if there exist $n$ pairwise distinct (i.e., non bisimilar) hypersets $\hbar_1, \ldots, \hbar_n \in \HHF$ such that the assignment $\x_i \mapsto \hbar_i$ satisfies all the set equations of $\ses(\x_1, \ldots, \x_n)$.
\end{definition}
Observe that, by the anti-foundation axiom, the assignment $\x_i \mapsto \hbar_i$ satisfying the equations of a given normal set system $\ses(\x_1, \ldots, \x_n)$ is plainly unique.

From now on, we shall write $\hbar$, with or without subscripts and/or superscripts, to denote a generic (possibly well-founded) hyperset in $\HHF$.

\smallskip

Having in mind the definition \eqref{RA} of $\R$, to each normal set system we associate a system of equations in $\reals$, called \emph{code system}, as follows.

\begin{definition}[Code systems]
Let $\ses(\x_1, \ldots, \x_n)$ be a normal set system of the form 
\begin{equation*}
\left\{
	\begin{aligned}
		\x_1 & =   \{\x_{1,1}, \ldots , \x_{1,m_1}\} \\
		 &   \hspace{.2cm}\vdots \\
		\x_n & =   \{\x_{n,1}, \ldots , \x_{n,m_n}\}, \\
	\end{aligned}
\right.
\end{equation*}
with index map $\idxmp$. The \emph{code system} associated with $\ses(\x_1, \ldots, \x_n)$ is the following system $\csy(\xsi_{1},\ldots,\xsi_{n})$ of equations in the real unknowns $\xsi_{1},\ldots,\xsi_{n}$:
\begin{equation}\label{codeSystem}
\left\{
	\begin{aligned}
	\xsi_1 & =   2^{-\xsi_{1,1}} + \cdots + 2^{-\xsi_{1,m_1}} \\
		 &   \hspace{.2cm}\vdots \\
	\xsi_n & =   2^{-\xsi_{n,1}} + \cdots + 2^{-\xsi_{n,m_n}}, \\
	\end{aligned}
	\right. 
\end{equation}
where $\xsi_{i,u}$ is a shorthand for $\xsi_{\idxmp(i,u)}$, for $i \in \{1,\ldots,n\}$ and $u \in \{1,\ldots,m_{i}\}$.
\end{definition}

\medskip

Normal set systems characterise all possible elements of $\HHF$ and we shall see that the corresponding code systems characterise their $\R$-codes. In fact, we shall prove that every code system admits a unique solution which can be computed as the limit of suitable sequences of real numbers. Terms in such sequences approximate the final solution alternatively from above and from below. In case of non-well-founded sets, such approximating sequences are infinite and convergent (to the codes of the non-well-founded sets); additionally, its terms eventually become codes of certain well-founded hereditarily finite sets which can be seen as approximations of the corresponding non-well-founded set.

\smallskip

We begin by formally defining the \textit{set} and \textit{multi-set approximating sequences} (of the solution) of set systems. 

\begin{definition}[Hereditarily finite multi-sets]\label{multi-sets}
\emph{Hereditarily finite multi-sets} are collection of elements---themselves multi-sets---that can occur with finite multiplicities.
\end{definition}

Hereditarily finite multi-sets will be denoted by specifying their elements in square brackets, in any order, where elements are repeated according to their multiplicities. 
For instance, the set $ a $ occurs in the multi-set $ [b, a, b, b] $ with multiplicity 1, whereas $ b $ occurs with multiplicity 3.

\begin{rem}\label{multi-set-set}
	A natural embedding of the hereditarily finite sets in the hereditarily finite multi-sets is the following: a multi-set $ \mu $ can  be regarded as a set if and only if each of its elements has multiplicity 1 and, recursively, can be regarded as a set. Thus, in particular, $ \emptyset $ is both a set and a multi-set.\qed
\end{rem} 


\begin{definition} Let $\ses(\x_1, \ldots, \x_n)$ be a (normal) set system of the form \eqref{setSystem}, and let $\idxmp$ be its index map.
	 The \emph{set-approximating sequence} for (the solution of) $\ses(\x_1, \ldots, \x_n)$ is the sequence $\big\{\langle\hbar_{i}^{j} \sT 1 \leq i \leq n\rangle\big\}_{j \in \nats}$ of the $n$-tuples of well-founded hereditarily finite sets defined by
	\begin{align*}
	\langle \hbar_i^{j} \sT 1 \leq i \leq n\rangle \defAs & \left\{\begin{array}{ll}
	 \langle\emptyset  \sT 1 \leq i \leq n\rangle & \mbox{ if } j=0  \\
	 & \\
	 \langle\{\hbar^{j-1}_{i,1}, \ldots , \hbar^{j-1}_{i,m_i}\} \sT 1 \leq i \leq n\rangle & \mbox{ if } j > 0\,, 
	\end{array}
	\right.
	\end{align*}
where $\hbar^{j-1}_{i,u}$ is a shorthand for $\hbar^{j-1}_{\idxmp(i,u)}$, for $i \in \{1,\ldots,n\}$ and $u \in \{1,\ldots,m_{i}\}$.

	 The \emph{multi-set approximating sequence} for (the solution of) $\ses(\x_1, \ldots, \x_n)$ is the sequence $\big\{\langle \mu_i^{j} \sT 1 \leq i \leq n\}\rangle\big\}_{j \in \nats}$ of the $n$-tuples of well-founded hereditarily finite multi-sets defined by
	 \begin{align*}
		 	\langle \mu_i^{j} \sT 1 \leq i \leq n\rangle \defAs & \left\{\begin{array}{ll}
		 		\langle\emptyset  \sT 1 \leq i \leq n\rangle & \mbox{ if } j=0  \\
		 		& \\
		 		\langle[ \mu ^{j-1}_{i,1}, \ldots , \mu ^{j-1}_{i,m_i}] \sT 1 \leq i \leq n\rangle & \mbox{ if } j >0\,,
		 	\end{array}
		 	\right.
	 \end{align*}
where $\mu^{j-1}_{i,u}$ is a shorthand for $\mu^{j-1}_{\idxmp(i,u)}$, for $i \in \{1,\ldots,n\}$ and $u \in \{1,\ldots,m_{i}\}$.		 
\end{definition}

Given a (normal) set system $\ses(\x_1, \ldots, \x_n)$, we say that two unknowns $\x_{i}$ and $\x_{i'}$, with $i,i' \in \{1,\ldots,n\}$, are \emph{distinguished} at step $k \geq 0$ by the set-approximating sequence $\big\{\langle\hbar_{i}^{j} \sT 1 \leq i \leq n\rangle\big\}_{j \in \nats}$ for $\ses(\x_1, \ldots, \x_n)$ (resp., multi-set approximating sequence $\big\{\langle\mu_{i}^{j} \sT 1 \leq i \leq n\rangle\big\}_{j \in \nats}$ for $\ses(\x_1, \ldots, \x_n)$) if $\hbar_{i}^{k} \neq \hbar_{i'}^{k}$ (resp., $\mu_{i}^{k} \neq \mu_{i'}^{k}$). Further, we shall refer to $\hbar_{i}^{j}$ (resp., $\mu_{i}^{j}$) as the \emph{($j$-th) set-approximation value} (resp., \emph{multi-set approximation value}) of $\x_{i}$ at step $j$.

\begin{example}
	Consider the following normal set system:
	\[
	\ses(\x_1,\x_2,\x_3,\x_4) = 
	\left\{
	\begin{array}{ccl}
	\x_1 & = &  \{\x_{2},  \x_{3}\} \\
	\x_2 & = & \{\} \\
	\x_3 & = &  \{\x_{3}\} \\
	\x_4 & = &  \{\x_{2}\}\,. \\
	\end{array}
	\right. \]
In this case, we have: $m_1=2, m_2=0, m_3=1$, and $m_4=1$; in addition, $\x_{1,1}$, $\x_{1,2}$, $\x_{3,1}$, and $\x_{4,1}$ are $\x_2$, $\x_3$,  $\x_3$, and $\x_{2}$, respectively, so that $\idxmp(1,1) = 2$, $\idxmp(1,2) = 3$, $\idxmp(3,1) = 3$, and $\idxmp(4,1) = 2$.
	
The first four elements of the set-approximating sequence for $ \ses $ are: 
\begin{align*}
		\left\langle \hbar^0_1, \hbar^0_2,\hbar^0_3, \hbar^0_4 \right\rangle = & 
		\left\langle \emptyset,\emptyset,\emptyset,\emptyset  \right\rangle \\
		\left\langle \hbar^1_1, \hbar^1_2,\hbar^1_3, \hbar^1_4 \right\rangle = & \left\langle \{\emptyset\},\emptyset,\{\emptyset\},\{\emptyset\}  \right\rangle \\
		\left\langle \hbar^2_1, \hbar^2_2,\hbar^2_3, \hbar^2_4 \right\rangle = & \left\langle \{\emptyset, \{\emptyset\}\},\emptyset,\{\{\emptyset\}\},\{\emptyset\}  \right\rangle \\
		\left\langle \hbar^3_1, \hbar^3_2,\hbar^3_3, \hbar^3_4 \right\rangle = & \left\langle \{\emptyset, \{\{\emptyset\}\}\},\emptyset,\{\{\{\emptyset\}\}\},\{\emptyset\}  \right\rangle \,.
\end{align*}
	
	Notice that, for $ j=2 $ and $j=3$, all the $ \hbar^j_i $'s are pairwise distinct. In fact, as a consequence of the next lemma, this is true also for all $ j>3 $.
	
	The first four tuples of the  multi-set approximating sequence of $ \ses $ are: 
	\begin{align*}
		\left\langle \mu ^0_1, \mu ^0_2,\mu ^0_3, \mu ^0_4 \right\rangle = & \left\langle \emptyset,\emptyset,\emptyset,\emptyset  \right\rangle \\
		\left\langle \mu ^1_1, \mu ^1_2,\mu ^1_3, \mu ^1_4 \right\rangle = & \left\langle [\emptyset,\emptyset],\emptyset,[\emptyset],[\emptyset]  \right\rangle \\
		\left\langle \mu ^2_1, \mu ^2_2,\mu ^2_3, \mu ^2_4 \right\rangle = & 
		\left\langle [\emptyset,[\emptyset]],\emptyset,[[\emptyset]],[\emptyset]  \right\rangle \\
		\left\langle \mu ^3_1, \mu ^3_2,\mu ^3_3, \mu ^3_4 \right\rangle = & 
		\left\langle [\emptyset, [[\emptyset]]],\emptyset,[[[\emptyset]]],[\emptyset]  \right\rangle \,.
	\end{align*}
	
	Also in this case, for $j=2$ and $ j=3 $, all the $ \mu ^j_i $'s are pairwise distinct and this holds also for $ j>3 $. Observe that the unknowns $\x_{i}$ and $\x_{j}$ are distinguished by the multi-set approximating sequence before than by the set-approximating sequence: indeed, $\mu_{1}^{1} \neq \mu_{1}^{3}$, whereas $\hbar_{1}^{1} = \hbar_{1}^{3}$. \qed
\end{example}

All sets in the tuples of a set-approximating system are well-founded hereditarily finite sets. The initial tuples of a multi-set approximating sequence may contain \textit{proper} multi-sets (as in the above example). However, as a consequence of the following lemma, after at most $n$ steps all pairs of distinct unknowns are distinguished and each tuple contains only proper sets.
 
\begin{lemma}\label{sempredistinti}
	Let $\ses(\x_1, \ldots, \x_n)$ be a normal set-system, and  let  $\big\{\langle \hbar_i^{j} \sT 1 \leq i \leq n\} \rangle\big\}_{j \in \nats}$ and   	$\big\{\langle \mu _i^{j} \sT 1 \leq i \leq n\rangle\big\}_{j \in \nats}$ be its set- and multi-set approximating sequences, respectively. 
		The following conditions hold:
\begin{itemize}
		\item[(a)]  for $i,i' \in \{1,\ldots,n\}$ and $j,k \geq 0$, we have
		\[ \hbar^j_i \neq \hbar^j_{i'} \:\Longrightarrow\: (\hbar^{j+k+1}_i \neq \hbar^{j+k+1}_{i'}  \wedge \mu ^j_i \neq \mu ^j_{i'})\]
		(namely, if at step $j \geq 0$ two unknowns are distinguished by the set-ap\-prox\-i\-mat\-ing sequence, then they are also distinguished by the multi-set approximating sequence; in addition, they remain distinguished in all subsequent steps);
		
		\item[(b)] for $j,k \geq 0$ and $i\in \{1,\ldots,n\}$, we have
		\[ \hbar^j_i = \hbar^{j+1}_{i} \:\Longrightarrow\:  \hbar^{j}_i = \hbar^{j+k+2}_{i}\]
		(namely, as soon as $\hbar^j_i = \hbar^{j+1}_{i}$ holds for some $j \geq 0$, the set-approximation value of $\x_{i}$ remains unchanged for all subsequent steps);
		\item[(c)] for $i,i'\in \{1, \ldots n\}$ and $j \geq n$, we have
		\[(i \neq i' \Longrightarrow \hbar^j_i \neq \hbar^j_{i'}) \:\wedge\:  \langle \mu _i^{j} \sT 1 \leq i \leq n \rangle = \langle \hbar_i^{j} \sT 1 \leq i \leq n \rangle \]
(namely, starting from step $n$, all pairs of distinct unknowns are distinguished and the set- and multi-set approximating sequences coincide).
	\end{itemize} 
\end{lemma}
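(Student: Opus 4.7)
The plan is to prove (a), (b), (c) in that order. For (a), I would introduce two auxiliary maps: the truncation $\mathsf{tr}_k$ on hereditarily finite sets, given by $\mathsf{tr}_0(h) = \emptyset$ and $\mathsf{tr}_{k+1}(h) = \{\mathsf{tr}_k(h') : h' \in h\}$, and the flattening $\mathsf{flat}$ on hereditarily finite multi-sets, given by $\mathsf{flat}(\emptyset) = \emptyset$ and $\mathsf{flat}([\mu_1,\ldots,\mu_k]) = \{\mathsf{flat}(\mu_1),\ldots,\mathsf{flat}(\mu_k)\}$. A direct induction on $j$ yields the identities $\mathsf{tr}_j(\hbar^{j+1}_i) = \hbar^j_i$ and $\mathsf{flat}(\mu^j_i) = \hbar^j_i$. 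Iterating the first, any putative coincidence $\hbar^{j+k+1}_i = \hbar^{j+k+1}_{i'}$ collapses under $\mathsf{tr}_j$ to $\hbar^j_i = \hbar^j_{i'}$; contraposing gives the propagation of set-distinctions. The second identity shows that $\mu^j_i = \mu^j_{i'}$ would force $\hbar^j_i = \hbar^j_{i'}$, so any set-distinction at step $j$ is already visible at the multi-set level.

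For (b), the main tool is strong induction on $r = \rk(\hbar^j_i)$. The base $r = 0$ is short: if $j \geq 1$, the equation $\{\hbar^{j-1}_{i,u} : u\} = \emptyset$ forces $m_i = 0$, giving $\hbar^{j'}_i = \emptyset$ for all $j'$; if $j = 0$ and $m_i \geq 1$, the premise $\hbar^0_i = \hbar^1_i$ fails outright because $\hbar^1_i = \{\emptyset\}$. For $r \geq 1$ the premise rewrites as $\{\hbar^{j-1}_{i,u} : u\} = \{\hbar^j_{i,u} : u\}$, an equality of sets. Each $\hbar^j_{i,u}$ is then an element of $\hbar^j_i$ and has rank at most $r - 1 \leq j - 1$, so the truncation identity of (a) gives $\hbar^{j-1}_{i,u} = \mathsf{tr}_{j-1}(\hbar^j_{i,u}) = \hbar^j_{i,u}$ for every $u$. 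Each child $\x_{\idxmp(i,u)}$ therefore satisfies the (b)-hypothesis at step $j-1$, with stable value of rank strictly less than $r$; the inductive hypothesis then delivers $\hbar^{j+k+1}_{\idxmp(i,u)} = \hbar^{j-1}_{\idxmp(i,u)}$ for every $k \geq 0$, and reassembling yields $\hbar^{j+k+2}_i = \{\hbar^{j+k+1}_{i,u} : u\} = \{\hbar^{j-1}_{i,u} : u\} = \hbar^j_i$.

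For (c), define $i \sim_j i' \iff \hbar^j_i = \hbar^j_{i'}$. By (a) the partition of $\{1,\ldots,n\}$ induced by $\sim_j$ refines as $j$ grows. A short bisimulation-style argument shows that a failure to refine at one step, $\sim_{j+1} = \sim_j$, forces $\sim_{j+2} = \sim_{j+1}$ (the matching of children witnessing coincidences at step $j+1$ still witnesses them at step $j+2$), so the sequence either strictly refines or stabilizes; strict refinement of an $n$-element partition can happen at most $n - 1$ times, so stabilization occurs by step $n - 1$, and normality identifies the stable partition with the discrete one, giving the first conjunct for every $j \geq n$. For the second conjunct, an induction on $j \geq n$ combining (a)'s multi-set half with the first conjunct shows hereditarily that no multi-set $\mu^j_i$ has repeated sub-multi-sets, whence $\mu^j_i$ coincides via the embedding of Remark \ref{multi-set-set} with the set $\hbar^j_i$.

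The main obstacle I expect is part (b): the set identity $\{\hbar^{j-1}_{i,u}\} = \{\hbar^j_{i,u}\}$ does not a priori entail pointwise equality of the children's values. The decisive trick is the rank bound inherited from $\hbar^j_i$, which triggers the truncation identity of (a) and pins $\hbar^j_{i,u}$ down to $\hbar^{j-1}_{i,u}$ individually; once this child-wise stability is secured, the induction on rank closes cleanly.
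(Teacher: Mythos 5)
Your proof is correct, and in parts (a) and (b) it takes a genuinely different route from the paper's. For (a), the paper proves the propagation of set-distinctions by induction on $j$ via a proof by contradiction (a putative later coincidence is traced back to an earlier one through a membership witness), and then runs a second, parallel induction for the multi-set half; your truncation map $\mathsf{tr}_j$ and flattening map $\mathsf{flat}$ package both halves into the two identities $\mathsf{tr}_j\big(\hbar^{j+k+1}_i\big)=\hbar^j_i$ and $\mathsf{flat}\big(\mu^j_i\big)=\hbar^j_i$, after which the implications are pure contrapositives. This is cleaner and makes explicit the fact that later approximations refine earlier ones, at the cost of a couple of routine auxiliary inductions you leave implicit (the composition law $\mathsf{tr}_j\circ\mathsf{tr}_{j+1}=\mathsf{tr}_j$ needed to ``iterate,'' and the rank bound $\rk(\hbar^j_i)\leq j$, which you use silently in the inequality $r-1\leq j-1$). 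For (b), both arguments hinge on the same key sub-claim---that $\hbar^j_i=\hbar^{j+1}_i$ forces the child-wise equalities $\hbar^{j-1}_{i,u}=\hbar^{j}_{i,u}$, which the bare equality of the two sets of children does not give for free---but you obtain it from ``truncation at level $k$ fixes sets of rank at most $k$,'' applied to the children whose rank drops below $r\leq j$ precisely because of the premise, whereas the paper obtains it from the complementary fact that a changing approximation value must have full rank $j+1$; your outer induction is then on rank rather than on $j$. Part (c) is essentially the paper's argument (monotone refinement of the induced partition, stabilization within $n$ steps, normality ruling out a non-trivial bisimulation); on the second conjunct you are exactly as terse as the paper, which likewise dispatches the identification $\mu^j_i=\hbar^j_i$ with a one-line appeal to Remark~\ref{multi-set-set}, so no complaint there.
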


\begin{proof}
For (a), first of all we prove by induction on $j$ that if $\hbar^j_i \neq \hbar^j_{i'}$, then $\hbar^{j+k}_i \neq \hbar^{j+k}_{i'}$, for all distinct $i,i'\in \{1, \ldots ,n\}$ and every $k > 0$. 
	
The base case $j=0$ holds trivially, since  $\hbar^0_i = \emptyset = \hbar^0_{i'}$. 
	
For the inductive step, let $j > 0$ and assume for contradiction that there exist pairwise distinct $i,i'$ such that $\hbar^j_i \neq \hbar^j_{i'}$, while $\hbar^{j+k}_i = \hbar^{j+k}_{i'}$, for some $k>0$. As $\hbar^j_i \neq \hbar^j_{i'}$, we can suppose w.l.o.g.\ that $ \hbar^{j-1}_{i,u} \notin \hbar^j_{i'}$, for some $u\in \{ 1, \ldots , m_i\}$. From the fact that $\hbar^{j+k}_i = \hbar^{j+k}_{i'}$, it follows that $\hbar^{j+k-1}_{i,u} \in \hbar^{j+k}_{i'}$. Hence, we have $\hbar^{j+k-1}_{i,u} = \hbar^{j+k-1}_{i',v}$ and $\hbar^{j-1}_{i,u} \neq \hbar^{j-1}_{i',v}$, for some  $v\in \{ 1, \ldots , m_{i'}\}$, contradicting the inductive hypothesis relative to $j-1$, for $k$ and the indices $\idxmp(i,u)$ and $\idxmp(i',v)$.
		
	To complete case (a), next we prove, again by induction on $ j $, that if $\hbar^j_i \neq \hbar^j_{i'}$, then $\mu^{j}_i \neq \mu^{j}_{i'}$, for all pairwise distinct $i,i'\in \{1, \ldots ,n\}$. 
	
	As before, the base case $j = 0$ is trivial. 
	
	For the inductive step $j > 0$, let us assume that $ \hbar^j_i \neq \hbar^j_{i'} $. Then we can suppose w.l.o.g.\ that $ \hbar^{j-1}_{i,u} \notin \hbar^j_{i'}$, for some $u\in \{ 1, \ldots , m_i\}$, so that $ \hbar^{j-1}_{i,u} \neq  \hbar^{j-1}_{i',v}$, for all $ v \in \{1, \ldots , m_{i'}\} $. Hence, from the inductive hypothesis, $ \mu ^{j-1}_{i,u} \neq  \mu ^{j-1}_{i',v}$, for all $v \in \{1,\ldots,m_{i'}\}$, which implies $ \mu ^j_i \neq \mu ^j_{i'} $.
	
	\medskip
	
	Concerning (b) and recalling that all sets in a set-approximating sequence are in $ \HF $, we begin by observing that it can easily be proved, by induction on $ j $, that $ \rk(\hbar_i^j) \leq j $. Moreover, we show---again by induction on $ j $---that if $ \hbar_i^j \neq \hbar_i^{j+1} $, then $ \rk(\hbar_i^{j+1}) = j+1 $. In fact, if $ \hbar_i^0 \neq \hbar_i^{1} $, then $ \hbar_i^{1}=\{\emptyset \} $, whose rank is 1. For the inductive step, observe that if $ \hbar_i^j \neq \hbar_i^{j+1} $, then $ \hbar_{i,u}^{j-1} \neq \hbar_{i,u}^{j} $, for some $ u \in \{1, \ldots, m_i \} $. Therefore, by inductive hypothesis, $ \rk(\hbar_{i,u}^{j}) = j $, which implies $ \rk(\hbar_i^{j+1}) = j+1 $, since $\hbar_{i,u}^{j} \in \hbar_i^{j+1}$ and $\rk(\hbar_i^{j+1}) \leq j+1$.
	
	On the grounds of the above result, we can prove that if $ \hbar^j_i = \hbar^{j+1}_{i} $, then $ \hbar^{j-1}_{i,u} = \hbar^{j}_{i,u} $ for all $ u \in \{1, \ldots, m_i \} $, . In fact, assuming for contradiction that  $ \hbar^j_i = \hbar^{j+1}_{i} $ but $ \hbar^{j-1}_{i,u} \neq \hbar^{j}_{i,u} $, for some $ u \in \{1, \ldots, m_i \} $, then we would have $ \rk(\hbar^j_i)=\rk(\hbar^{j+1}_i) \geq \rk(\hbar^{j}_{i,u})+1=j+1 $, contradicting the fact that $ \rk(\hbar^j_i) \leq j $.
	
	We can now prove (b) by induction on $ j $. 
	
	For the base case $ j=0 $, if $ \hbar^0_i = \hbar^{1}_{i} $, then $\hbar^{1}_{i} = \hbar^0_i =\emptyset $, so that the equation $\x_{i} = \{\}$ must be in $\ses(\x_1, \ldots, \x_n)$, because otherwise we would have $\hbar^{1}_{i} = \{\emptyset\}$. Thus, the claim easily follows.
	
	For the inductive case $ j>0 $, let $ \hbar^j_i = \hbar^{j+1}_{i} $. Then $ \hbar^{j-1}_{i,u} = \hbar^{j}_{i,u} $, for all $ u \in \{1, \ldots, m_i \} $, so that, 
by the inductive hypothesis, $ \hbar^{j+k-1}_{i,u} = \hbar^{j-1+k+2}_{i,u} $, for all $ u \in \{1, \ldots, m_i \} $ and $ k >0 $. Thus, 
\[
\hbar^{j+k+2}_{i} = \big\{\hbar^{j-1+k+2}_{i,1},\ldots, \hbar^{j-1+k+2}_{i,m_{i}}\big\} = \big\{\hbar^{j-1}_{i,1},\ldots, \hbar^{j-1}_{i,m_{i}}\big\} = \hbar^{j}_{i}\,,
\]
proving (b).

	\medskip
	
	Concerning case (c), we first prove that the inequality $ \hbar^j_i \neq \hbar^j_{i'} $ holds for every $ j \geq n $ and pairwise distinct $  i,i'\in \{1, \ldots n\} $.
	To see this, observe that, from (a), the set of  inequalities 
	\[I^j \defAs \big\{ \langle i, i'\rangle \ | \ h^j_i\neq h^j_{i'}\big\}\] 
can only increase as $j$ grows. As a matter of fact, the growth is strict until  stabilization. In fact,  by the definition of set-system, when no new inequality is established at a given stage $j$, no new inequality can be established at any later stage $ j+1, j+2, \ldots $. Hence,   in at most $n$ steps, all the inequalities that can eventually be established must actually hold.  
	
	Let us now prove  that when $I^j$ stabilizes, all the $\hbar^j_i$'s are pairwise distinct. To see this,  consider the equivalence relation $R^{j}$ over $\{\x_{1},\ldots,\x_{n}\}$ defined as:
	\[ R^{j}(\x_i, \x_{i'}) \quad \mbox{ iff } \quad \hbar_i^{j} = \hbar_{i'}^{j}.\]
	It is easy to see that $R^j$ is a bisimulation and hence, since $\ses(\x_1, \ldots, \x_n)$ is normal, $R^{j}$ must be the identity. 
	Therefore the elements in $\langle \hbar_i^{j} : i \in \{1, \ldots n\} \rangle$ are pairwise distinct. 
		
	The fact that $ \langle \mu _i^{j} : i \in \{1, \ldots n\} \rangle = \langle \hbar_i^{j} : i \in \{1, \ldots n\} \rangle $ follows immediately by Remark~\ref{multi-set-set}.\qed
\end{proof}

Point a) in the above lemma suggests that even though set and multi-set approximating sequences will eventually ``separate'' all solutions of a set system, multi-set can introduce inequalities first. This is our first motivation for  extending the notion of  $ \R $-code to multi-sets and use such code-extension to approximate regular $ \R $-codes. The second motivation is given below in Remark \ref{cardeq}.

\begin{definition}
	Given a normal set system $ \ses(\x_1, \ldots, \x_n) $ and its multi-set approximating sequence $\big\{\langle \mu_i^{j} \sT 1 \leq i \leq n\}\rangle\big\}_{j \in \nats}$, we define the corresponding \textit{code approximating sequence} $\big\{\langle \Rs(\mu_i^{j}) \sT 1 \leq i \leq n\}\rangle\big\}_{j \in \nats}$ by recursively putting, for $i \in \{1, \ldots, n\}$ and $j \in \nats$:
\begin{equation}\label{cas}
\left\{
\begin{array}{rcl}
\Rs(\mu_i^{0}) & \defAs & 0\\
\Rs(\mu_i^{j+1}) & \defAs & \sum_{u = 1}^{m_{i}} 2^{-\Rs(\mu_{i,u}^{j})}\,.
\end{array}
\right.	
\end{equation}
We also define the corresponding \emph{code increment sequence} $\big\{\langle \delta_i^{j} \sT 1 \leq i \leq n\rangle\big\}_{j \in \nats}$ by putting, for $i \in \{1, \ldots, n\}$ and $j \in \nats$:
\begin{equation}\label{deltaDef}
\delta_{i}^{j} \defAs \Rs(\mu_{i}^{j+1}) - \Rs(\mu_{i}^{j})\,.
\end{equation}
\end{definition} 

Plainly, $\Rs(\mu_i^{j}) \geq 0$, for all $i\in \{1,\ldots,n\}$ and $j \in \nats$.

\begin{rem} \label{cardeq} \begin{sloppypar}
 Consider a normal set system $ \ses(\x_1, \ldots, \x_n) $ and its solutions $ \hbar_1, \ldots, \hbar_n $. The values $ \Rs(\mu ^1_i) $ and $ \Rs(\mu ^1_{i'}) $ are equal if and only if  $ |\hbar_i|=|\hbar_{i'}| $. The values $ \Rs(\mu ^2_i) $ and $ \Rs(\mu ^2_{i'}) $ are equal if and only if the multi-sets of the cardinalities of elements in $ \hbar_i$ and $\hbar_{i'} $ are equal. The values $ \Rs(\mu ^3_i) $ and $ \Rs(\mu ^3_{i'}) $ are equal if and only if the multi-sets of multi-sets of the cardinalities of elements of elements in $ \hbar_i$ and $\hbar_{i'} $ are equal, and so on.\qed
\end{sloppypar}
\end{rem}  

We shall make use of the following elementary property.

\begin{lemma}\label{anIneq}
For $x,y \in \reals$, if $\abs{y} \leq \abs{x}$ and $xy \leq 0$, then $\abs{2^{-y} -1} \leq \abs{2^{x} -1}$.
\end{lemma}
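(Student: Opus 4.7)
The plan is to eliminate the minus sign in the exponent by the substitution $u \defAs -y$, which converts the hypotheses into the statement that $u$ and $x$ have the same sign and $|u| \leq |x|$. After this reduction, the lemma amounts to the claim that the map $t \mapsto |2^{t}-1|$ is non-decreasing in $|t|$ separately on each of the half-lines $[0,+\infty)$ and $(-\infty,0]$.

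First I would split into two cases according to the common sign of $x$ and $u = -y$ (the boundary case where one of them is $0$ is harmless). If $x \geq 0$ and $u \geq 0$, then $|u| \leq |x|$ reads $0 \leq u \leq x$; since $t \mapsto 2^{t}$ is strictly increasing, we obtain $1 \leq 2^{u} \leq 2^{x}$, whence
\[
|2^{-y} - 1| \;=\; 2^{u} - 1 \;\leq\; 2^{x} - 1 \;=\; |2^{x}-1|.
\]
If instead $x \leq 0$ and $u \leq 0$, then $|u| \leq |x|$ reads $x \leq u \leq 0$; monotonicity of $t \mapsto 2^{t}$ then yields $2^{x} \leq 2^{u} \leq 1$, so
\[
|2^{-y} - 1| \;=\; 1 - 2^{u} \;\leq\; 1 - 2^{x} \;=\; |2^{x}-1|.
\]
In both cases the desired inequality follows.

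There is essentially no obstacle here: the only conceptual point is to recognize that the sign condition $xy \leq 0$ is exactly what is needed to place $-y$ and $x$ on the same half-line with respect to $0$, so that monotonicity of $2^{t}$ on that half-line can be invoked directly. Everything else is a one-line estimate, and the lemma can be stated compactly as a two-case argument without any auxiliary machinery.
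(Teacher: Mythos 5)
Your argument is correct and is essentially the same as the paper's: both proofs split on the common sign of $x$ and $-y$ (the paper writes the two cases as $y \leq 0 \leq x$ and $x \leq 0 \leq y$) and then invoke monotonicity of $t \mapsto 2^{t}$ on the relevant half-line. Your substitution $u = -y$ is only a notational streamlining of the same two-case computation.
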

\begin{proof}
Let $x,y \in \reals$ be such that $\abs{y} \leq \abs{x}$ and $xy \leq 0$. Plainly, we have: 
\begin{equation}\label{ineqS}
1 \leq 2^{\abs{y}} \leq 2^{\abs{x}}.
\end{equation}
Assume first that $y \leq 0 \leq x$. Then \eqref{ineqS} yields $1 \leq 2^{-y} \leq 2^{x}$, so that $0 \leq 2^{-y} -1 \leq 2^{x} -1$, and therefore $\abs{2^{-y} -1} \leq \abs{2^{x} -1}$.

On the other hand, if $x \leq 0 \leq y$, then,  by \eqref{ineqS}, we have $1 \leq 2^{y} \leq 2^{-x}$. By taking inverses, the latter yields $2^{x} \leq 2^{-y} \leq 1$, so that $0 \leq 1 - 2^{-y} \leq 1 - 2^{x} $. Hence, $\abs{2^{-y} -1} \leq \abs{2^{x} -1}$ holds again.\qed
\end{proof}

In preparation for the proof of the existence and uniqueness of a solution to the code system associated with any normal set system, we state and prove the technical properties contained in the following lemma.

\begin{lemma}\label{technical}
Let $\ses(\x_1, \ldots, \x_n)$ be a  normal set system of the form 
\[
\left\{
	\begin{aligned}
		\x_1 & =   \{\x_{1,1}, \ldots , \x_{1,m_1}\} \\
		 &   \hspace{.2cm}\vdots \\
		\x_n & =   \{\x_{n,1}, \ldots , \x_{n,m_n}\}, \\
	\end{aligned}
\right.
\]
with index map $\idxmp$, code approximating sequence $\big\{\langle \Rs(\mu_i^{j}) \sT 1 \leq i \leq n\}\rangle\big\}_{j \in \nats}$, and code increment sequence $\big\{\langle \delta_i^{j} \sT 1 \leq i \leq n\}\rangle\big\}_{j \in \nats}$. Then, for $i \in \{1,\ldots,n\}$ and $j \in \nats$, the following facts hold:
\begin{enumerate}[label=(\roman*),leftmargin=0.8cm]
	\item\label{Final000} $\Rs(\mu_i^{j+1}) =  \delta_i^{0} + \cdots + \delta_i^{j}$,	
	
	\item\label{Final00} $\delta^{0}_i = m_{i}$,
	
	
	\item\label{Final0} $\delta_i^{j+1} =\sum_{u=1}^{m_i} 2^{-\Rs(\mu ^{j}_{i,u})} \cdot (2^{-\delta_{i,u}^{j}} - 1)$,

	\item\label{Finali} $\delta^{2j+1}_i \leq 0  \leq \delta^{2j}_i$,
		
		\item\label{Finalii} $|\delta_{i}^{j+1}| \leq |\delta_{i}^{j}|$, and
		
		\item\label{Finaliii} $\lim_{j \rightarrow \infty } \delta^{j}_i =0$.
	\end{enumerate}
\end{lemma}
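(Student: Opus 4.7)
The plan is to dispatch the arithmetic identities (i)--(iii) directly from the definitions, then to obtain the sign alternation (iv) and the contractive bound (v) by a joint induction on $j$ that leans on Lemma~\ref{anIneq}, and finally to derive the convergence (vi)---the substantive statement---through a monotone-subsequence argument.

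For (i), the definition \eqref{deltaDef} gives the telescoping $\sum_{k=0}^{j}\delta_i^k = \Rs(\mu_i^{j+1}) - \Rs(\mu_i^0) = \Rs(\mu_i^{j+1})$, since $\Rs(\mu_i^0) = 0$. For (ii), the recursion \eqref{cas} at $j=0$ yields $\delta_i^0 = \Rs(\mu_i^1) = \sum_{u=1}^{m_i} 2^0 = m_i$. For (iii), I would rewrite $\delta_i^{j+1} = \sum_u\bigl(2^{-\Rs(\mu_{i,u}^{j+1})} - 2^{-\Rs(\mu_{i,u}^j)}\bigr)$ and factor out $2^{-\Rs(\mu_{i,u}^j)}$ using $\Rs(\mu_{i,u}^{j+1}) = \Rs(\mu_{i,u}^j) + \delta_{i,u}^j$.

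I would prove (iv) and (v) together by induction on $j$, simultaneously for every $i$. The base cases are immediate: $\delta_i^0 = m_i \geq 0$ by (ii), and $|\delta_i^1| = \sum_u(1 - 2^{-m_{i,u}}) \leq m_i = |\delta_i^0|$. For the inductive step, (iii) expresses $\delta_i^{j+1}$ as a sum whose summands $2^{-\Rs(\mu_{i,u}^j)}(2^{-\delta_{i,u}^j}-1)$ each carry sign opposite to $\delta_{i,u}^j$; the inductive hypothesis of (iv) makes all $\delta_{i,u}^j$ share the sign dictated by the parity of $j$, so $\delta_i^{j+1}$ has sign opposite to $\delta_i^j$, completing the alternation. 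For (v), the uniformity of signs justifies $|\delta_i^{j+1}| = \sum_u 2^{-\Rs(\mu_{i,u}^j)}|2^{-\delta_{i,u}^j}-1|$; using $2^{-\Rs(\mu_{i,u}^j)} = 2^{-\Rs(\mu_{i,u}^{j-1})}\cdot 2^{-\delta_{i,u}^{j-1}}$ and Lemma~\ref{anIneq} applied termwise with $x = \delta_{i,u}^{j-1}$, $y = \delta_{i,u}^j$ (whose hypotheses $|y|\leq|x|$ and $xy \leq 0$ are furnished by the inductive hypotheses of (v) and (iv)), one obtains the summand-by-summand bound $2^{-\delta_{i,u}^{j-1}}|2^{-\delta_{i,u}^j}-1| \leq |2^{-\delta_{i,u}^{j-1}}-1|$, whence $|\delta_i^{j+1}| \leq |\delta_i^j|$ after summing over $u$.

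The main obstacle is (vi). From (v) each $|\delta_i^j|$ is non-increasing and non-negative, hence converges to some $L_i \geq 0$, and the task is to rule out $L_i > 0$. Using (iv) and (v) together, the consecutive differences $\Rs(\mu_i^{2k+2})-\Rs(\mu_i^{2k}) = \delta_i^{2k}+\delta_i^{2k+1}$ and $\Rs(\mu_i^{2k+3})-\Rs(\mu_i^{2k+1}) = \delta_i^{2k+1}+\delta_i^{2k+2}$ carry definite signs, so the even-indexed subsequence of $\Rs(\mu_i^j)$ is monotone non-decreasing and the odd-indexed one is monotone non-increasing; both are bounded (for instance by $m_i$) and hence converge to limits $R_i^{-} \leq R_i^{+}$ with $R_i^{+} - R_i^{-} = L_i$. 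Passing to the limit in (iii) along $j = 2k$ produces the self-referential identity
\[
L_i \;=\; \sum_{u=1}^{m_i} 2^{-R_{i,u}^{-}}\bigl(1 - 2^{-L_{i,u}}\bigr),
\]
where $R_{i,u}^{-}$ and $L_{i,u}$ abbreviate $R_{\idxmp(i,u)}^{-}$ and $L_{\idxmp(i,u)}$. The genuine difficulty is to deduce $L_i = 0$ from this identity; I expect the argument to exploit the equality case of Lemma~\ref{anIneq}---equality $|2^{-y}-1| = |2^x-1|$ occurs only when $|x|=|y|$, so an asymptotic regime with $L_i > 0$ would force $|\delta_{i,u}^{j}| = |\delta_{i,u}^{j-1}|$ backwards along the recursion and clash with the strict drop $|\delta_i^1| < |\delta_i^0|$ guaranteed at the first step whenever $m_i > 0$.
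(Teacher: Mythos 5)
Your treatment of (i)--(iii) is exactly the paper's, and for (iv)--(v) you also follow the paper's route: the alternation (iv) is read off the sign of each summand in (iii), and the contraction (v) rests on the termwise bound $2^{-\delta_{i,u}^{j-1}}\bigl|2^{-\delta_{i,u}^{j}}-1\bigr| \leq \bigl|2^{-\delta_{i,u}^{j-1}}-1\bigr|$, which is precisely the strengthened statement that the paper carries along as an extra induction hypothesis and extracts from Lemma~\ref{anIneq}. Your packaging of that induction differs slightly, but the content is identical and correct.

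The genuine gap is in (vi), and you have flagged it yourself. Everything up to the identity $L_i=\sum_{u=1}^{m_i} 2^{-R_{i,u}^{-}}\bigl(1-2^{-L_{i,u}}\bigr)$ is sound (the monotone even/odd subsequences, the bound by $m_i$, and $R_i^{+}-R_i^{-}=L_i$ all check out), but deducing $L_i=0$ from that identity \emph{is} the entire content of (vi), and the closing move you gesture at does not work. A sequence can satisfy $|\delta_i^{j+1}|<|\delta_i^{j}|$ strictly at every step and still converge to a positive limit, and nothing forces the limiting regime to be attained at any finite stage, so no equality case of Lemma~\ref{anIneq} propagates ``backwards along the recursion'' to clash with the strict drop at $j=0$. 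Nor does the naive estimate $1-2^{-t}\leq t\ln 2$ close the loop: it only gives $L_i\leq \ln 2\cdot\sum_{u} 2^{-R^{-}_{i,u}}L_{i,u}$, and $\sum_{u} 2^{-R^{-}_{i,u}}=R_i^{+}$ need not be below $1/\ln 2$. So as written, (vi) is reduced to an unproved fixed-point statement about the system $R^{+}_i=\sum_u 2^{-R^{-}_{i,u}}$, $R^{-}_i=\sum_u 2^{-R^{+}_{i,u}}$, which is not easier than the original claim.

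The idea your proposal is missing is the one the paper uses to avoid the limit identity altogether: substitute $2^{-\Rs(\mu^{2j}_{i,u})}=2^{-\Rs(\mu^{2j-1}_{i,u})}\cdot 2^{-\delta^{2j-1}_{i,u}}$ into (iii) so that $\delta^{2j+1}_i$ is expressed through terms of the form $2^{-(\delta^{2j}_{i,u}+\delta^{2j-1}_{i,u})}-2^{-\delta^{2j-1}_{i,u}}$, and then exploit the fact that the sum of \emph{consecutive} increments satisfies $\delta^{2j}_{i,u}+\delta^{2j-1}_{i,u}=\bigl|\delta^{2j}_{i,u}\bigr|-\bigl|\delta^{2j-1}_{i,u}\bigr|\rightarrow 0$ merely because the monotone sequence $\bigl\{|\delta^{j}_{i,u}|\bigr\}_{j}$ converges --- to whatever limit. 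This cancellation between adjacent increments is what lets the paper estimate $\delta^{2j+1}_i$ directly by a null sequence and conclude $L_i=0$, without ever solving the self-referential system you arrive at.
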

\begin{proof}
Statement \ref{Final000} follows by induction from \eqref{deltaDef}, whereas  \ref{Final00} follows from \eqref{cas} and \eqref{deltaDef}.

\smallskip

Statement \ref{Final0} is easily proved by the following chain of equalities:
	\begin{align*} 
	\delta_i^{j+1} &= \Rs(\mu_{i}^{j+2}) - \Rs(\mu_{i}^{j+1}) && \text{(by \eqref{deltaDef})}\\
	&= \sum_{u = 1}^{m_i } \left(2^{-\Rs\left( \mu^{j+1}_{i,u}\right)} -  2^{-\Rs\left( \mu^{j}_{i,u}\right)}\right) && \text{(by \eqref{cas})}\\
	&= \sum_{u = 1}^{m_i } \left(2^{-\left(\Rs\left( \mu^{j}_{i,u}\right)+\delta^{j}_{i,u}\right)} -  2^{-\Rs\left( \mu^{j}_{i,u}\right)}\right) && \text{(by \eqref{deltaDef})}\\
	&= \sum_{u = 1}^{m_i } 2^{-\Rs\left( \mu^{j}_{i,u}\right)} \cdot \big(2^{-\delta^{j}_{i,u}} -  1\big)\,,
	\end{align*}
for $i \in \{1,\ldots,n\}$ and $j \in \nats$.

\smallskip

Next, statement \ref{Finali} follows by applying repeatedly \ref{Final0} and by observing that, by \ref{Final00}, $\delta_{i}^{0} \geq 0$ for every $i \in \{1,\ldots,n\}$.

\smallskip

To prove \ref{Finalii}, we proceed by induction on $j$. We need to strenghten the inductive hypothesis. Specifically, besides \ref{Finalii}, we also prove that the additional statement
\begin{enumerate}[label=(\roman*$'$), start=5,leftmargin=.8cm]
\item\label{Finalii0} $2^{-\delta^{j}_{i}} \cdot \big|2^{-\delta^{j+1}_{i}} -  1\big| \leq \big|2^{-\delta^{j}_{i}} -  1\big|$
\end{enumerate}
holds for every $i \in \{1,\ldots,n\}$.

For the base case $j=0$, we have:

\allowdisplaybreaks 
\begin{equation}\label{baseCaseFinalii}
\begin{aligned} 
	\big| \delta^{1}_i\big|&= \sum_{u = 1}^{m_i } 2^{-\Rs\left(\mu^{0}_{i,u}\right)}\cdot\big|2^{-\delta^{0}_{i,u}} -  1\big| && \text{(by \ref{Final0})}\\
	&= \sum_{u = 1}^{m_i} \big|2^{-\delta^{0}_{i,u}} -  1\big|  && \text{(by \eqref{cas})}\\
	&= \sum_{u = 1}^{m_i} \big(1 - 2^{-m_{i,u}}\big) \\
    &\leq m_{i} = \big| \delta^{0}_i\big| && \text{(by \ref{Final00}),}
\end{aligned}
\end{equation}
for every $i \in \{1,\ldots,n\}$, proving \ref{Finalii} for $j=0$ (where $m_{i,u}$ stands for $m_{\idxmp(i,u)}$).

Concerning \ref{Finalii0}, we observe that, by \ref{Finali}, the inequality 
\eqref{baseCaseFinalii} yields $0 \leq -\delta^{1}_i \leq \delta^{0}_i$. Thus, the following further inequalities hold:
\begin{gather*}
1 \leq 2^{-\delta^{1}_i} \leq 2^{\delta^{0}_i} \\
0 \leq 2^{-\delta^{1}_i} - 1 \leq 2^{\delta^{0}_i} - 1 \\
0 \leq 2^{-\delta^{0}_i} \cdot \big(2^{-\delta^{1}_i} - 1\big) \leq 1 - 2^{-\delta^{0}_i} \\
2^{-\delta^{0}_i} \cdot \big|2^{-\delta^{1}_i} - 1\big| \leq \big| 2^{-\delta^{0}_i} - 1 \big|\,,
\end{gather*}
proving also \ref{Finalii0} in the base case $j=0$.

\smallskip

For the inductive step, assume that for some $j \in \nats$ the following inequality holds for every $\ell \in \{1,\ldots,n\}$:
\begin{equation}\label{wasFinalii0}
2^{-\delta^{j}_{\ell}} \cdot \big|2^{-\delta^{j+1}_{\ell}} -  1\big| \leq \big|2^{-\delta^{j}_{\ell}} -  1\big|\,.
\end{equation}
Then we have, for every $i \in \{1,\ldots,n\}$:
\begin{align*} 
	\big| \delta^{j+2}_i\big|&= \sum_{u = 1}^{m_i } 2^{-\Rs\left(\mu^{j+1}_{i,u}\right)}\cdot\big|2^{-\delta^{j+1}_{i,u}} -  1\big| && \text{(by \ref{Final0})}\\
	&= \sum_{u = 1}^{m_i } 2^{-\left(\Rs\left( \mu^{j}_{i,u}\right) + \delta^{j}_{i,u}\right)}\cdot\big|2^{-\delta^{j+1}_{i,u}} -  1\big| && \text{(by \eqref{deltaDef})}\\
	&= \sum_{u = 1}^{m_i } 2^{-\Rs\left( \mu^{j}_{i,u}\right)}\cdot 2^{-\delta^{j}_{i,u}}\cdot\big|2^{-\delta^{j+1}_{i,u}} -  1\big|\\
	&\leq \sum_{u = 1}^{m_i } 2^{-\Rs\left( \mu^{j}_{i,u}\right)}\cdot \big|2^{-\delta^{j}_{i}} -  1\big| && \text{(by \eqref{wasFinalii0})}\\
	&= \big| \delta^{j+1}_i\big| && \text{(by \ref{Final0}),}
\end{align*}
proving \ref{Finalii} for $j+1$. In addition, from \ref{Finali} and the latter inequality $\big| \delta^{j+2}_i\big| \leq \big| \delta^{j+1}_i\big|$, Lemma~\ref{anIneq} yields $\big| 2^{-\delta^{j+2}_i} - 1\big| \leq \big| 2^{\delta^{j+1}_i} -1 \big|$, from which $2^{-\delta^{j+1}_i} \cdot \big| 2^{-\delta^{j+2}_i} - 1\big| \leq \big|2^{-\delta^{j+1}_i} -1 \big|$ follows immediately, thus proving also statement \ref{Finalii0} for $j+1$. Hence, by induction, \ref{Finalii} holds.

\medskip

Finally, concerning \ref{Finaliii}, we observe that, for every $j \geq 1$, we have:
\begin{equation}\label{limitZero}
\begin{aligned} 
	\delta^{2j+1}_i &= \sum_{u=1}^{m_i} 2^{-\Rs\left(\mu ^{2j}_{i,u}\right)} \cdot \big(2^{\delta_{i,u}^{2j}} - 1\big) && \text{(by \ref{Final0})}\\
	&= \sum_{u = 1}^{m_i } 2^{-\Rs\left( \mu^{2j-1}_{i,u}\right)} \cdot 2^{-\delta^{2j-1}_{i,u}}\cdot \big(2^{-\delta^{2j}_{i,u}} -  1\big) && \text{(by \eqref{deltaDef})}\\
	&= \sum_{u = 1}^{m_i } 2^{-\Rs\left( \mu^{2j-1}_{i,u}\right)} \cdot \big(2^{-\left(\delta^{2j}_{i,u}+\delta^{2j-1}_{i,u}\right)} -  2^{-\delta^{2j-1}_{i,u}}\big)\\
	&\leq \sum_{u = 1}^{m_i } 2^{-\Rs\left( \mu^{2j-1}_{i,u}\right)} \cdot\big(2^{-\left(\delta^{2j}_{i,u}+\delta^{2j-1}_{i,u}\right)} -  1\big),
\end{aligned}
\end{equation}
where the last inequality follows from \ref{Finali}, since $\delta^{2j-1}_{i,u} \leq 0$ and therefore $-2^{-\delta^{2j-1}_{i,u}} \leq -1$.
From \ref{Finalii}, the sequence $\{|\delta_{i}^{j}|\}_{j \in \nats}$ is convergent. Therefore, by \ref{Finali},
\[
\lim_{j \rightarrow \infty} \big(\delta_{i}^{2j} + \delta_{i}^{2j-1}\big) = \lim_{j \rightarrow \infty} \big(\absb{\delta_{i}^{2j}} - \absb{\delta_{i}^{2j-1}}\big) = 0\,,
\]
so that $\lim_{j \rightarrow \infty} \big(2^{(\delta_{i}^{2j} + \delta_{i}^{2j-1})} - 1\big)  = 0$.
Hence, from \eqref{limitZero} it follows that $\lim_{j \rightarrow \infty} \delta_{i}^{2j+1} = 0$, since $2^{-\Rs\left( \mu^{2j-1}_{i,u}\right)} \leq 1$, for all $j \geq 1$, $i \in \{1,\ldots,n\}$, and $u \in \{1,\ldots,m_{i}\}$. But then we have $\lim_{j \rightarrow \infty} \absb{\delta_{i}^{j}} = 0$, which plainly implies $\lim_{j \rightarrow \infty} \delta_{i}^{j} = 0$, proving \ref{Finaliii}, and in turn completing the proof of the lemma.\qed
\end{proof}

\begin{theorem}\label{Final}
For any given normal set system, the corresponding code system admits always a unique solution.
\end{theorem}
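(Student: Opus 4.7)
My plan is to establish both existence and uniqueness at once by identifying the solution with the limit of the code approximating sequence $\{\langle \Rs(\mu_i^j) \sT 1 \leq i \leq n\rangle\}_{j \in \nats}$ introduced just above, and then squeezing any rival solution against that same sequence.

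For existence, I would first observe that, for each $i \in \{1,\ldots,n\}$, the scalar sequence $\{\Rs(\mu_i^j)\}_{j \in \nats}$ is precisely the sequence of partial sums $\Rs(\mu_i^{j+1}) = \delta_i^{0}+\cdots+\delta_i^{j}$ of a real series whose terms, by parts \ref{Finali}, \ref{Finalii} and \ref{Finaliii} of Lemma~\ref{technical}, alternate in sign, have weakly decreasing absolute values, and tend to zero. By Leibniz's criterion, the limit $r_i \defAs \lim_{j \to \infty} \Rs(\mu_i^j)$ therefore exists. Passing to the limit in the defining recurrence \eqref{cas} and exploiting the continuity of $x \mapsto 2^{-x}$, one obtains $r_i = \sum_{u=1}^{m_i} 2^{-r_{i,u}}$ for every $i$, so that $\langle r_1,\ldots,r_n\rangle$ is a solution of the code system $\csy(\xsi_1,\ldots,\xsi_n)$.

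For uniqueness, I would introduce the operator $T\colon\reals^n\to\reals^n$ whose $i$-th component is $T(x)_i \defAs \sum_{u=1}^{m_i} 2^{-x_{\idxmp(i,u)}}$, so that the solutions of the code system are exactly the fixed points of $T$. A direct comparison with \eqref{cas} shows that $T^j(\mathbf{0}) = \langle \Rs(\mu_i^j) \sT 1 \leq i \leq n\rangle$ for every $j$. The crucial property to exploit is that $T$ is \emph{antitone}: whenever $x \leq y$ componentwise, one has $T(x) \geq T(y)$ componentwise, since each coordinate map $z \mapsto 2^{-z}$ is strictly decreasing. Now let $s = \langle s_1,\ldots,s_n\rangle$ be an arbitrary solution; since $s_i$ is a sum of positive reals, $s_i \geq 0 = \Rs(\mu_i^0)$ for every $i$, and a straightforward induction on $j$ that applies $T$ twice per step (using $s = T(s)$ and antitonicity) yields
\[ \Rs(\mu_i^{2j}) \leq s_i \leq \Rs(\mu_i^{2j+1}) \quad \text{for all } i \in \{1,\ldots,n\} \text{ and } j \in \nats. \]
Letting $j \to \infty$ and invoking the squeeze theorem gives $s_i = r_i$ for every $i$, so $r$ is the unique solution.

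With the analytic heavy lifting already absorbed into Lemma~\ref{technical}, no serious obstacle remains; the only delicate point is the order-reversing character of the one-step map $T$, which is precisely what converts the alternating convergence established in the lemma into a genuine two-sided bracket enclosing every fixed point.
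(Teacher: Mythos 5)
Your proposal is correct and follows essentially the same route as the paper's proof: existence via the Leibniz alternating-series criterion applied to the increments $\delta_i^j$ from Lemma~\ref{technical} followed by passing to the limit in the recurrence, and uniqueness by squeezing any solution between $\Rs(\mu_i^{2j})$ and $\Rs(\mu_i^{2j+1})$. Your packaging of the uniqueness step as antitonicity of the one-step operator $T$ is merely a tidier rephrasing of the paper's explicit order-reversing inequality chains, not a different argument.
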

\begin{proof}
\begin{sloppypar}
Let $\ses(\x_1, \ldots, \x_n)$ be a normal set system of the form \eqref{setSystem}, and let $\big\{\langle \Rs(\mu_i^{j}) \sT 1 \leq i \leq n\}\rangle\big\}_{j \in \nats}$ and $\big\{\langle \delta_i^{j} \sT 1 \leq i \leq n\}\rangle\big\}_{j \in \nats}$ be its code approximating sequence and code increment sequence, respectively. Also, let $\csy(\xsi_{1},\ldots,\xsi_{n})$ be the code system
\end{sloppypar}
\begin{equation*}
\left\{
	\begin{aligned}
	\xsi_1 & =  2^{-\xsi_{1,1}} + \cdots + 2^{-\xsi_{1,m_1}} \\
		 &   \hspace{.2cm}\vdots \\
	\xsi_n & =  2^{-\xsi_{n,1}} + \cdots + 2^{-\xsi_{n,m_n}}, \\
	\end{aligned}
	\right. 
\end{equation*}
associated with $\ses(\x_1, \ldots, \x_n)$. 

We first exhibit a solution of the system $\csy(\xsi_{1},\ldots,\xsi_{n})$ and then prove its uniqueness.

\smallskip

\paragraph{\textbf{Existence:}} By Lemma~\ref{technical}\ref{Final000},\ref{Finali},\ref{Finaliii}, using the Leibniz criterion for alternating series, it follows that each of the sequences $\{\Rs(\mu_i^{j})\}_{j\in \nats}$ is convergent, for $i \in \{1,\ldots,n\}$. Let us put $\alpha_{i} \defAs \lim_{j \rightarrow \infty} \Rs(\mu_i^{j})$, for $i \in \{1,\ldots,n\}$. From \eqref{cas}, we have
\[
\Rs(\mu_i^{j+1}) = \sum_{u = 1}^{m_{i}} 2^{-\Rs(\mu_{i,u}^{j})},\quad \text{for } j \in \nats\,.
\]
Then, by taking the limit of both sides as $j$ approaches infinity, it follows that 
\[
\alpha_i =  \sum_{u = 1}^{m_{i}} 2^{-\alpha_{i,u}},
\]
for $i \in \{1,\ldots,n\}$, where $\alpha_{i,u}$ is a shorthand for $\alpha_{\idxmp(i,u)}$, with $\idxmp$ the index map of $\ses(\x_1, \ldots, \x_n)$, proving that the $n$-tuple $\langle \alpha_{1},\ldots,\alpha_{n}\rangle$ is a solution of the code system $\csy(\xsi_{1},\ldots,\xsi_{n})$.

\smallskip

\paragraph{\textbf{Uniqueness:}} \begin{sloppypar}Next we prove that the solution $\langle \alpha_{1},\ldots,\alpha_{n}\rangle$ is unique. Let $\langle \alpha_{1}',\ldots,\alpha_{n}'\rangle$ be any solution of the code system $\csy(\xsi_{1},\ldots,\xsi_{n})$. To show that $\langle \alpha_{1}',\ldots,\alpha_{n}'\rangle = \langle \alpha_{1},\ldots,\alpha_{n}\rangle$ it is enough to prove that
\end{sloppypar}
\begin{equation}\label{alphaPrime}
\Rs\big(\mu_{i}^{2j}\big) \leq \alpha_{i}' \leq \Rs\big(\mu_{i}^{2j+1}\big), \quad \text{for $j \in \nats$ and $i \in \{1,\ldots,n\}$},
\end{equation} 
holds. Indeed, from \eqref{alphaPrime}, it follows immediately
\[
\alpha_{i} = \lim_{j \rightarrow \infty} \Rs\big(\mu_{i}^{2j}\big) \leq \alpha_{i}' \leq \lim_{j \rightarrow \infty} \Rs\big(\mu_{i}^{2j+1}\big) = \alpha_{i}\,,
\]
for every $i \in \{1,\ldots,n\}$, showing that $\langle \alpha_{1}',\ldots,\alpha_{n}'\rangle = \langle \alpha_{1},\ldots,\alpha_{n}\rangle$.

We prove \eqref{alphaPrime} by induction on $j$, for all $i \in \{1,\ldots,n\}$.

For the base case $j = 0$, we observe that since $\alpha_{i}' =  \sum_{u = 1}^{m_{i}} 2^{-\alpha'_{i,u}}$ (where, as usual, $\alpha'_{i,u}$ stands for $\alpha'_{\idxmp(i,u)}$), then
\[
\Rs\big(\mu_{i}^{0}\big) = 0 \leq \alpha_{i}' \leq m_{i} = \Rs\big(\mu_{i}^{1}\big), \quad \text{for $i \in \{1,\ldots,n\}$}.
\]
For the inductive step, let us assume that 
\begin{equation}\label{inductiveHyp}
\Rs\big(\mu_{i}^{2j}\big) \leq \alpha_{i}' \leq \Rs\big(\mu_{i}^{2j+1}\big), \quad \text{for } i \in \{1,\ldots,n\},
\end{equation}
holds for some $j \in \nats$, and prove that it holds for $j+1$ as well.  From \eqref{inductiveHyp} and recalling that $\alpha_{i}' =  \sum_{u = 1}^{m_{i}} 2^{-\alpha'_{i,u}}$, the following inequalities hold, for every $i \in \{1,\ldots,n\}$ and $u \in \{1,\ldots,m_{i}\}$:
\begin{gather*}
\Rs\big(\mu_{i,u}^{2j}\big) \leq \alpha_{i,u}' \leq \Rs\big(\mu_{i,u}^{2j+1}\big)\\
2^{-\Rs\left(\mu_{i,u}^{2j+1}\right)} \leq 2^{-\alpha_{i,u}'} \leq 2^{-\Rs\left(\mu_{i,u}^{2j}\right)}\\
\sum_{u=1}^{m_{i}} 2^{-\Rs\left(\mu_{i,u}^{2j+1}\right)} \leq \sum_{u=1}^{m_{i}} 2^{-\alpha_{i,u}'} \leq \sum_{u=1}^{m_{i}} 2^{-\Rs\left(\mu_{i,u}^{2j}\right)}\\
\Rs\big(\mu_{i}^{2j+2}\big) \leq \alpha_{i}' \leq \Rs\big(\mu_{i}^{2j+1}\big).
\end{gather*}
The inequalities on the last line (for $i \in \{1,\ldots,n\}$) imply in particular that we have
\[
\Rs\big(\mu_{i,u}^{2j+2}\big) \leq \alpha_{i,u}' \leq \Rs\big(\mu_{i,u}^{2j+1}\big),
\]
for every $i \in \{1,\ldots,n\}$ and $u \in \{1,\ldots,m_{i}\}$. Hence, by repeating the very same steps as above, one can deduce also the inequalities
\[
\Rs\big(\mu_{i}^{2(j+1)}\big) = \Rs\big(\mu_{i}^{2j+2}\big) \leq \alpha_{i}' \leq \Rs\big(\mu_{i}^{2j+3}\big) = \Rs\big(\mu_{i}^{2(j+1)+1}\big) ,
\]
for $i \in \{1,\ldots,n\}$, proving that \eqref{inductiveHyp} holds for $j+1$ too. This completes the induction, and also the proof of the theorem.\qed
\end{proof}

\begin{rem}
To show that the code $\R$ is well-defined over the whole $\HHF$, we proceed as follows. Given a hereditarily finite hyperset $\hbar \in \HHF$, let $\hbar_{1},\ldots,\hbar_{n}$ be the distinct elements of the transitive closure $\TrCl{\{\hbar\}}$ of $\{\hbar\}$, where $\hbar_{1} = \hbar$.
Then we have
\begin{equation}\label{systemHypersets}
\left\{
\begin{aligned}
\hbar_{1} &= \{\hbar_{1,1},\ldots,\hbar_{1,m_{1}}\}\\
&\hspace{.2cm}\vdots\\
\hbar_{n} &= \{\hbar_{n,1},\ldots,\hbar_{n,m_{n}}\}
\end{aligned}
\right.
\end{equation}
for suitable hypersets $\hbar_{i,j} \in \{\hbar_{1},\ldots,\hbar_{n}\}$, with $i \in \{1,\ldots,n\}$ and $j \in \{1,\ldots,m_{i}\}$.

Consider the set system $\ses(\x_1, \ldots, \x_n)$
\[
\left\{
	\begin{aligned}
		\x_1 & =   \{\x_{1,1}, \ldots , \x_{1,m_1}\} \\
		 &   \hspace{.2cm}\vdots \\
		\x_n & =   \{\x_{n,1}, \ldots , \x_{n,m_n}\}, \\
	\end{aligned}
\right.
\]
associated with \eqref{systemHypersets}, where $\x_{i,j} = \x_{\ell}$ iff $\hbar_{i,j} = \hbar_{\ell}$, for all $i,\ell \in \{1,\ldots,n\}$ and $j \in \{1,\ldots,m_{i}\}$. Plainly, $\ses(\x_1, \ldots, \x_n)$ is normal and $\hbar_{1},\ldots,\hbar_{n}$ is its solution. By Theorem~\ref{Final}, let $\alpha_{1},\ldots,\alpha_{n}$ be the solution to the code system associated with $\ses(\x_1, \ldots, \x_n)$. Then $\R(\hbar_{i}) = \alpha_{i}$, for $i \in \{1,\ldots,n\}$, and, in particular, $\R(\hbar) = \R(\hbar_{1}) = \alpha_{1}$. Further, it is immediate to check that, for every normal set system $\ses'(x'_1, \ldots, x'_m)$ containing $\hbar$ in its solution, say at position $\bar k \in \{1,\ldots,m\}$, if $\alpha'_{1},\ldots,\alpha'_{m}$ is the solution to the corresponding code system, then $\alpha'_{\bar k} = \alpha_{1}$. In other words, the value $\R(\hbar)$ computed by the above procedure is independent of the normal set system used. By the arbitrariness of $\hbar$, it follows that $\R(\hbar)$ is defined for every hyperset $\hbar \in \HHF$. \qed
\end{rem}

\section{A first step towards injectivity}
As already remarked, the problem of establishing the injectivity of the map $\R$ is still open. As an initial example, we provide here a very partial result. However, the arguments used in the proof below do not seem to easily generalize even to the narrower task of proving the injectivity of $\R$ over the \emph{well-founded} hereditarily finite sets only. 

\begin{lemma}\label{duecasi}
For all $i \in \nats$, we have:
\begin{enumerate}[label=(\alph*)]
\item\label{duecasi1} $\R(h_i) \neq \R(h_{i+1})$, and
\item\label{duecasi2} $\R(h_i) \neq \R(h_{i+2})$,
\end{enumerate}
where $h_{j}$ is the $j$-th element of $\HF$ in the Ackermann ordering.
\end{lemma}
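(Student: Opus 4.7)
The plan is to exploit Lemma~\ref{lowLemma}\ref{lowLemmaiv}, which expresses $h_{i+1}$ as $\big(h_i \setminus \{h_0,\ldots,h_{\low(i)-1}\}\big) \cup \{h_{\low(i)}\}$. Applying the definition of $\R$ together with items \ref{lowLemmaii} and \ref{lowLemmaiii} of the same lemma, I would first derive the key identity
\[
\R(h_{i+1}) - \R(h_i) \;=\; 2^{-\R(h_{\ell})} \,-\, \sum_{k=0}^{\ell - 1} 2^{-\R(h_k)},
\]
where $\ell := \low(i)$. Both (a) and (b) then reduce to showing that a simple finite combination of values $2^{-\R(h_k)}$ is nonzero.

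For (a), I would split on $\ell$. If $\ell = 0$, the sum is empty and the difference equals $2^{-\R(h_0)} = 1 \neq 0$. If $\ell \geq 1$, the $k=0$ summand already contributes $2^{-\R(h_0)} = 1$, while $\R(h_\ell) > 0$ (because $h_\ell \neq \emptyset$) forces $2^{-\R(h_\ell)} < 1$; since the remaining summands are strictly positive, the difference is strictly negative, hence nonzero.

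For (b), I would telescope, $\R(h_{i+2}) - \R(h_i) = \big(\R(h_{i+2}) - \R(h_{i+1})\big) + \big(\R(h_{i+1}) - \R(h_i)\big)$, and split on the parity of $i$. When $i$ is even, $\low(i)=0$ so $\R(h_{i+1})-\R(h_i)=1$, and $h_0 \in h_{i+1}$ gives $\low(i+1) \geq 1$. When $i$ is odd, $\ell := \low(i) \geq 1$, and since $h_0$ is removed from $h_i$ while $h_\ell \neq h_0$, we get $h_0 \notin h_{i+1}$, hence $\low(i+1)=0$ and $\R(h_{i+2})-\R(h_{i+1})=1$. In either sub-case, the $+1$ absorbs the $k=0$ summand and the total difference takes the shape
\[
\R(h_{i+2}) - \R(h_i) \;=\; 2^{-\R(h_{r})} \,-\, \sum_{k=1}^{r-1} 2^{-\R(h_k)}
\]
for some $r \geq 1$. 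To conclude, I would split on $r$: for $r = 1$ the sum is empty and the expression is $2^{-\R(h_1)} = 1/2 \neq 0$; for $r = 2$ it is $2^{-1/2} - 1/2 \neq 0$; for $r \geq 3$ the sum already exceeds $2^{-\R(h_1)} + 2^{-\R(h_2)} = 1/2 + 2^{-1/2} > 1$ whereas $2^{-\R(h_r)} < 1$ because $\R(h_r)>0$, so the difference is strictly negative.

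The only delicate point is the parity bookkeeping in (b), namely checking that when $i$ is odd the elimination of $h_0$ from $h_i$ in forming $h_{i+1}$ forces $\low(i+1)=0$, so that the two increments telescope cleanly into the clean form above. Beyond this, the argument is entirely elementary and relies only on the explicit values $\R(h_0) = 0$, $\R(h_1) = 1$, $\R(h_2) = 1/2$ already computed in the paper, together with the observation that $\R(h_k) > 0$ for every $k \geq 1$.
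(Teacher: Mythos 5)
Your proposal is correct and follows essentially the same route as the paper's proof: the same key identity $\R(h_{i+1}) = \R(h_i) - \sum_{k=0}^{\low(i)-1}2^{-\R(h_k)} + 2^{-\R(h_{\low(i)})}$ obtained from Lemma~\ref{lowLemma}, the same parity split for part \ref{duecasi1}, and for part \ref{duecasi2} the same telescoping in which one increment equals $1$ and the other---your quantity $2^{-\R(h_r)} - \sum_{k=1}^{r-1}2^{-\R(h_k)}$, which is exactly $1+\Delta_r$ in the paper's notation---is shown nonzero by checking $r=1,2$ explicitly and bounding the sum above $1$ for $r\geq 3$. The parity bookkeeping you flag as delicate is sound, since $h_0\in h_j$ iff $j$ is odd settles $\low(i+1)$ in both sub-cases.
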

 \begin{proof}
Let $i \in \nats$. From \ref{lowLemmaiii} and \ref{lowLemmaiv} of Lemma~\ref{lowLemma} and from \eqref{2toj}, we have
\begin{equation}\label{Rhiplus1}
\R(h_{i+1}) = \R(h_{i}) - \R(h_{2^{\low(i)}-1}) + \R(h_{2^{\low(i)}})\,.
\end{equation}
If $i$ is even, then $\low(i) = 0$, and therefore
\[
\R(h_{2^{\low(i)}-1}) = 0 \neq 1 = \R(h_{2^{\low(i)}})\,.
\]
On the other hand, if $i$ is odd, then $\low(i) \neq 0$, and so, by \eqref{jOdd}:
\[
\R(h_{2^{\low(i)}}) < 1 = \R(\{h_{0}\}) \leq \R(h_{2^{\low(i)}-1})\,.
\]
In any case, we have $\R(h_{2^{\low(i)}}) - \R(h_{2^{\low(i)}-1}) \neq 0$, proving \ref{duecasi1}, by \eqref{Rhiplus1}.

\medskip

Concerning \ref{duecasi2}, we begin by putting
\[
\Delta_{j} \defAs \R(h_{2^{j}}) - \R(h_{2^{j}-1})\,,
\]
for $j \in \nats$. Let us show that $\Delta_{j} \neq -1$, for all $j \in \nats$. To begin with, for $j = 0,1,2$, we have:
\begin{align*}
\Delta_{0} &= \R(h_{1}) - \R(h_{0}) = 1 \neq -1 \\
\Delta_{1} &= \R(h_{2}) - \R(h_{1}) = \R(\{h_{1}\}) - \R(h_{1}) = 2^{-1} -1 = - \frac{1}{2} \neq -1\\
\Delta_{2} &= \R(h_{4}) - \R(h_{3}) = \R(\{h_{2}\}) - \R(\{h_{0},h_{1}\}) \\
 & \phantom{{}= \R(h_{4}) - \R(h_{3}) } = 2^{-2^{-1}} - \left(1+\frac{1}{2}\right) = \frac{1}{\sqrt{2}} - \frac{3}{2} \neq -1 \,.
\end{align*}
In addition, for $j > 2$, we have $\{h_{0},h_{1},h_{2}\} \subseteq h_{2^{j}-1}$, and therefore:
\begin{align*}
\R(h_{2^{j}-1}) \geq \R(\{h_{0},h_{1},h_{2}\}) &= 2^{-\R(h_{0})} + 2^{-\R(h_{1})} + 2^{-\R(h_{2})}\\
&= 2^{-0} + 2^{-1} + 2^{-2}\\
&= 1 + \frac{1}{2} + \frac{1}{\sqrt{2}} > 2\,. 
\end{align*}
Hence, we have $\Delta_{j} \neq -1$ also for $j > 2$, since $\R(h_{2^{j}}) = 2^{-\R(h_{j})}< 1$. But then, from \eqref{Rhiplus1}, we have, for every $i \in \nats$:
\begin{align*}
\R(h_{i+2})- R(h_{i}) &= \R(h_{i+2})- R(h_{i+1}) + \R(h_{i+1})- R(h_{i}) \\
&= \Delta_{\low(i+1)}+\Delta_{\low(i)} \neq 0\,,
\end{align*}
since either $i$ or $i+1$ is even and so either $\Delta_{\low(i)}=1$ or $\Delta_{\low(i+1)}=1$, whereas, as we proved above, $\Delta_{\low(i)}\neq -1 \neq \Delta_{\low(i+1)}$. This proves \ref{duecasi2}, completing the proof of the lemma.\qed
\end{proof}

\begin{rem}
	While the value of $\R\left( \mu^{0}_i\right)$ is $0$ for any $i\in \{1, \ldots, n\}$, the value of $\R\left( \mu^{1}_i\right)$---first approximation of $\R\left( \mu_i\right)$---is the cardinality of $\mu_i$, and the subsequent approximations oscillate within the interval $\left[0,\left|\mu_i\right|\right]$. \qed
\end{rem}

\section*{Conclusions}
By turning labels into sets, the encoding proposed in this paper can be used on a variety of structures, going from labelled graphs to Kripke models. This can be done in many different ways and in \cite{DPP04,DBLP:journals/tplp/PiazzaP04} the reader can find a rather general---albeit non optimised---technique to carry out this label elimination task. A label elimination performed to optimise  code computation (or its form) is under study.

The algorithmic side of $\R$ is also under study. A possible direction towards its usage---for example in bisimulation computation---starts from the observation that only the computation of a bounded number of digits is actually necessary  to realise all the inequalities in any given set system. 

\section*{Acknowledgements} The authors thank Eugenio Omodeo for very pleasant and fruitful conversations, and an anonymous reviewer for his/her comments.

\bibliographystyle{amsalpha}
\bibliography{QAckermann}

\providecommand{\bysame}{\leavevmode\hbox to3em{\hrulefill}\thinspace}
\providecommand{\MR}{\relax\ifhmode\unskip\space\fi MR }
\providecommand{\MRhref}[2]{%
  \href{http://www.ams.org/mathscinet-getitem?mr=#1}{#2}
}
\providecommand{\href}[2]{#2}
\begin{thebibliography}{DOPT15}

\bibitem[Ack37]{Ackermann-37}
W.~Ackermann, \emph{Die {W}iderspruchfreiheit der allgemeinen {M}engenlehre},
  Mathematische Annalen \textbf{114} (1937), 305--315.

\bibitem[Acz88]{Acz}
P.~Aczel, \emph{Non-well-founded sets}, vol. 14 of CSLI Lecture Notes,
  Stanford, CA, 1988.

\bibitem[BM96]{BM}
J.~Barwise and L.~S. Moss, \emph{Vicious circles}, CSLI Lecture Notes,
  Stanford, CA, 1996.

\bibitem[DOPT15]{DBLP:journals/fuin/DAgostinoOPT15}
Giovanna D'Agostino, Eugenio~G. Omodeo, Alberto Policriti, and Alexandru~I.
  Tomescu, \emph{Mapping sets and hypersets into numbers}, Fundam. Inform.
  \textbf{140} (2015), no.~3-4, 307--328.

\bibitem[DPP04]{DPP04}
A.~Dovier, C.~Piazza, and A.~Policriti, \emph{An efficient algorithm for
  computing bisimulation equivalence}, Theor. Comput. Sci. \textbf{311} (2004),
  no.~1-3, 221--256.

\bibitem[Eul83]{EulerOmnia}
L.~Euler, \emph{De serie {L}ambertina {P}lurimisque eius insignibus
  proprietatibus.}, Acta {A}cad. {S}cient. {P}etropol. \textbf{2} (1783),
  29--51, reprinted in Euler, L. Opera Omnia, Series Prima, Vol. 6:
  Commentationes Algebraicae. Leipzig, Germany: Teubner, pp. 350--369, 1921.

\bibitem[FH83]{FH83}
M.~Forti and F.~Honsell, \emph{Set theory with free construction principles},
  Annali {S}cuola {N}ormale {S}uperiore di {P}isa, Classe di Scienze
  \textbf{IV} (1983), no.~10, 493--522.

\bibitem[Gel34]{G34}
Aleksandr Gelfond, \emph{Sur le septi\`eme {P}robl\`eme de {H}ilbert}, Bulletin
  de l'{A}cad\'emie des {S}ciences de l'{URSS}. \textbf{VII} (1934), no.~4,
  62--634.

\bibitem[Hil02]{Hilbert-1902}
D.~Hilbert, \emph{Mathematical problems}, Bulletin of the American Mathematical
  Society \textbf{8} (1902), no.~10, 437--479.

\bibitem[Lam58]{Lambert1758}
J.H. Lambert, \emph{Observations variae in {M}athesin {P}uram}, Acta Helvitica,
  physico-mathematico-anatomico-botanico-medica \textbf{3} (1758), 128--168.

\bibitem[OPT17]{OmodeoPT2017}
Eugenio~G. Omodeo, Alberto Policriti, and Alexandru~I. Tomescu, \emph{On sets
  and graphs. perspectives on logic and combinatorics}, Springer, 2017.

\bibitem[Pol13]{DBLP:conf/cilc/Policriti13a}
A.~Policriti, \emph{Encodings of sets and hypersets}, Proc.\ of the 28th
  Italian Conference on Computational Logic, Catania, Italy, September 25-27,
  2013. (D.~Cantone and M.~Nicolosi Asmundo, eds.), vol. 1068, CEUR Workshop
  Proceedings, ISSN 1613-0073, 2013, pp.~235--240.

\bibitem[PP04]{DBLP:journals/tplp/PiazzaP04}
C.~Piazza and A.~Policriti, \emph{Ackermann {E}ncoding, {B}isimulations, and
  {OBDD}s}, Theory and Practice of Logic Programming \textbf{4} (2004),
  no.~5-6, 695--718.

\end{thebibliography}

\appendix

\end{document}